\DeclareMathOperator{\argmax}{argmax}
\DeclareMathOperator{\argmin}{argmin}
\newcommand{\St}{\mathcal{S}}
\newcommand{\U}{\mathcal{U}}
\newcommand{\T}{\mathcal{T}}
\newtheorem{theorem}{Result}
\newcommand{\ie}{i.e.\ }
\newcommand{\eg}{e.g.\ }
\newcommand{\pr}{\text{Pr}}
\newcommand{\be}{\begin{equation}}
\newcommand{\ee}{\end{equation}}
\title{Considerate Approaches to Achieving Sufficiency for ABC model selection}
\author{Chris Barnes$^{1,\ast}$, Sarah Filippi$^{1,\ast}$, Michael P.H. Stumpf$^{1,\ast,\#}$, Thomas Thorne$^{1,\ast}$ \\ \small $^1$Centre for Integrative Systems Biology and Bioinformatics,\\ \small Imperial College London, London SW7 2AZ, UK. }
\begin{document}

\maketitle
\begin{abstract}
For nearly any challenging scientific problem evaluation of the likelihood is problematic if not impossible. Approximate Bayesian computation (ABC) allows us to employ the whole Bayesian formalism to problems where we can use simulations from a model, but cannot evaluate the likelihood directly. When summary statistics of real and simulated data are compared --- rather than the data directly --- information is lost, unless the summary statistics are sufficient. Here we employ an information-theoretical framework that can be used to construct (approximately) sufficient statistics by combining different statistics until the loss of information is minimized. Such sufficient sets of statistics are constructed for both parameter estimation and model selection problems. We apply our approach to a range of illustrative and real-world model selection problems.
\end{abstract}

\noindent
\small $^\ast$ All authors contributed equally.\\
\noindent
\small $^\#$ To Whom Correspondence Should be Addressed: m.stumpf@imperial.ac.uk
\section{Introduction}
Mathematical models are widely used to describe and analyze complex systems and processes across the natural, engineering and social sciences.   Formulating a model to describe, e.g. a predator-prey system, geophysical process, communication system, or social network requires us to condense our assumptions and knowledge into a single coherent framework \citep{May:2004p11952}. In constructing these models we have to state our assumptions about their constituent parts and their interactions explicitly. Mathematical analysis or computer simulations of these models then allows us to compare model predictions with experimental observations in order to test and ultimately improve the models. Even previously largely observational sciences such as biology, geology and meteorology are now heavily influenced by computer simulations which are employed for explanatory as well as predictive purposes. 
\par
Because many of the mathematical models in these disciplines are too complicated to be analyzed in closed form, computer simulations have become the primary tool in the theoretical analysis of very large or complex models. Modelling such systems is often (relatively) straightforward  if the mathematical structure of the model and reliable estimates for the model parameters are known. Unfortunately, it is considerably harder  to infer the structure of mathematical models and estimate their respective parameters based on experimental data, in particular if we seek to infer a model that can describe the data. Whenever probabilistic models exist we can employ standard model selection approaches of either a frequentist, Bayesian, or information theoretic nature \citep{Cox1974,Mackay:2003aa,Burnham:2002p4089}. But if suitable probability models do not exist, or if the evaluation of the likelihood is computationally intractable, then we have to base our assessment on the level of agreement between simulated and observed data. This is particularly challenging when the parameters of simulation models are not known but must be inferred from observed data as well. 
\par
 For such cases --- cases where conventional statistical approaches fail because of the enormous computational burden incurred in evaluating the likelihood --- so-called approximate Bayesian computation (ABC) schemes have recently come to the fore \citep{Pritchard:1999td,Beaumont:2002ue,Tanaka:2006fj,Secrier:2009ko}. These forgo the explicit evaluation of the likelihood by a principled comparison between the observed and simulated data. In many cases inferences are furthermore based not on the data themselves, but on summary statistics of the data. Such statistics serve as data compression tools\citep{Cover:2006aa} and, if used sensibly, enable computationally efficient inference from data sets, where the complexity of the data would stymie conventional likelihood-based methods \citep{Pritchard:1999td,Ratmann:2007hh}.
 \par
 ABC schemes have become increasingly popular, because of their flexibility and their deceptive conceptual simplicity. While especially some computationally demanding areas have fuelled the development of powerful ABC approaches, notably population genetics \citep{Fagundes:2007fr}, evolutionary biology \citep{Wilkinson:2010fc}, systems biology \citep{Liepe:2010eg}, dynamical systems theory \citep{Toni:2009gm}, and epidemiology \citep{Blum:2008cr}, a worrying increase in naive (and plainly incorrect, see \eg \cite{Walker:2010ki}) applications are beginning to emerge. Such problems, as recent results by \cite{Didelot:2010wo} and \cite{Robert:2011wz} suggest, are more imminent in model selection applications. In the present context, all problems stem back to the issue of {\em sufficiency} of statistics, and its role in model selection. The present paper sets out to develop remedies for such problems.  Below we will begin with an outline of the basic ideas underlying ABC, before discussing the particular challenges raised in particular by Robert and colleagues. We will then list the cases where  ABC-based model selection is possible; in essence it is the ill-judged use of summary statistics and failure to ensure sufficiency which lies at the heart of the problem identified by \cite{Robert:2011wz}, before setting out methods that allow us to remedy these problems. We illustrate the use of these methods in a number of applications before concluding with some more general remarks on the conceptual and mathematical foundations of ABC approaches.

\section{Approximate Bayesian Computation}
\subsection{Sufficient Statistics}
Bayesian inference centres around the {\em posterior distribution},
\be
p(\theta|x) = \frac{f(x|\theta)\pi(\theta)}{p(x)}
\label{eq:posterior}
\ee
where $x$ are the data, which are drawn from some sample space, $x\in\Omega\subseteq \mathds{R}^D$, $f(x|\theta)$ is the {\em likelihood}, $\pi(\theta)$ the {\em prior distribution}, and $\theta$ an unknown parameter \citep{Robert:2007aa}; $p(x)$ is often called the evidence \citep{Mackay:2003aa}, but in many applications or discussions dismissed as a normalization constant. The {\em Likelihood principle} states that all the information about parameter $\theta$ is contained in the likelihood function $f(x|\theta)$, \ie once we have the form of the likelihood, we do not have to retain any of the data. This principle is complemented by the {\em sufficiency principle}. Here a summary statistic of the general form
\be
\St: \mathbb{R}^d \longrightarrow \mathbb{R}^w, \quad\St(x) = s 
\ee
with $w\ll d$ typically, is called sufficient if the likelihood is independent of the parameter conditional on the value of the summary statistic. We denote by $f(x|\theta,s)$ the likelihood conditionally on the value of the summary statistic $\St(x)=s$ and $g(x|s)$ the density probability of the data given the summary statistic. The statistic is sufficient if and only if:
$$
f(x|s,\theta) = g(x|s).
$$
The likelihood can then generally be written in the Neyman-Fisher factorized form
\be
f(x|\theta) = g(x|s)\tilde{f}(s|\theta),
\label{eq:factor}
\ee
where $s=\St(x)$ and $\tilde{f}(s|\theta)$ is the likelihood of the sufficient statistic \citep{Cox2006}. The function $g(\cdot)$ is independent of the parameter $\theta$. Thus $\tilde{f}(s|\theta)$ carries all the information about the parameter. 
\par
This factorization is, however, not unique, as it depends on the sufficient statistic, which is generally not unique. For example, any statistic containing additional information in addition to a sufficient statistic is also sufficient. Therefore we typically seek to determine the minimally sufficient statistic, which is generally unique. As a consequence the functional forms (and values) of $g(\cdot)$ and $\tilde{f}(\cdot|\theta)$ depend on the choice of sufficient statistics. 
\par
In order to understand the terms in the factorization theorem, Eqn. (\ref{eq:factor}), we focus on the case where $X$ is a discrete random variable, we then have 
$$
g(x|s) = \pr(X=x|\St(X)=s)
$$
and 
$$
\tilde{f}(s|\theta) = \pr(\St(X)=s|\theta).
$$
Thus $g(x|s)$ is really the conditional probability of $X$ given an observed value for the summary statistic, $\St(X)=s$ and it is therefore linked to the compression of the data achieved by the summary statistic. Here it is worth remembering that the complete data also form a valid summary statistic, and for this choice of statistic we have trivially, for all $s$, $g(x|s)=1, \ \ \forall x\in \Omega$.

\subsection{ABC for parameter inference}
In practical applications we are  interested in evaluating the posterior distribution for model parameters, $\theta$, defined Eqn. (\ref{eq:posterior}).
When the likelihood is hard to evaluate it is still often possible to simulate from the model according to $f(\cdot|\theta)$. It is easy to show that for simulated data $y$ we have 
\be
p(\theta|x) = \frac{\mathds{1}(x=y)f(y|\theta)\pi(\theta)}{p(x)},
\ee
which in many practical applications can be approximated using suitable distance functions, $\Delta(x,y)$, whence, after marginalization over simulated data we get,
\be
p(\theta|x) \approx \int_\Omega \frac{\mathds{1}(\Delta(x,y)\le \epsilon) f(y|\theta)\pi(\theta)}{p(x)} dy.
\label{eq:ABC1}
\ee
This is obviously correct, as $\epsilon\longrightarrow 0$. 
\par
Based on the fact that sufficient statistics contain all the information about the $\theta$ that is contained in the data, we may be tempted to replace the data by the corresponding summary statistics. We thus replace the comparison of the data in Eqn. (\ref{eq:ABC1}) by a comparison of the values of their respective summary statistics, using a distance function which, by abusing the notation, is denoted by $\Delta(\St(x),\St(y))$,
\begin{align}
p(\theta|x)& \approx  \int_\Omega \frac{\mathds{1}(\Delta(\St(x),\St(y))\le \epsilon) f(y|\theta)\pi(\theta)}{p(x)} dy\nonumber\\
&= \frac{\int_{\mathds{R}^D} \mathds{1}(\Delta(\St(x),s)\le \epsilon) \tilde{f}(s|\theta)\pi(\theta)ds}{\int_\Theta\int_{\mathds{R}^w} \mathds{1}(\Delta(\St(x),s)\le \epsilon) \tilde{f}(s|\theta)\pi(\theta)ds d\theta},
\label{eq:ABC2}
\end{align}
where we have made it explicit in the second line that once we use summary statistics we are only considering the second term on the right-hand side of Eqn. (\ref{eq:factor}); any dependence on $g(\cdot)$ is lost, and so is therefore the effect of the data-compression in the summary statistic. Something similar is also implicit in conventional Bayesian inference. \cite{Cox2006} reinforces this point by stating that
 ``{\em Any Bayesian inference uses the data only via the minimal sufficient statistic. This is because the calculation of the posterior distribution involves multiplying the likelihood by the prior and normalizing. Any factor of the likelihood that is a function of $y$ alone will disappear after normalization.}"
\par
Quite generally, the choice of the summary statistic is important: without sufficiency the whole inference will only map the parameter regimes that will lead to model behaviour which embodies the constraints implied by specified summary statistic. Only if the summary statistic is sufficient, however, will we be able to infer the model parameters \citep{Fearnhead:2010vj}. For some non-sufficient statistics, however, some aspects of the true posterior can be elucidated as shown in Figure \ref{fig:stat-param}. We will return to a discussion of non-sufficient statistics later.  

\subsection{ABC for model selection}
One of the perhaps most useful (and aesthetically pleasing) aspects of the Bayesian inferential frameworks is that model selection is natural and intrinsic, especially compared to frequentist frameworks \citep{Robert:2007aa}. From the earliest days ABC approaches for model selection have also been promoted, see \eg \cite{Beaumont:2002ue} and \cite{Fagundes:2007fr}. Recently, however, \cite{Robert:2011wz} have issued a note of caution. This is based on the observation that a statistic, or set of statistics, which is sufficient for model parameters in different models, may still not be sufficient across models \citep{Didelot:2010wo,Robert:2011wz}. 
\par
\begin{figure}[tph]
\begin{center}
\includegraphics[width=0.8\textwidth]{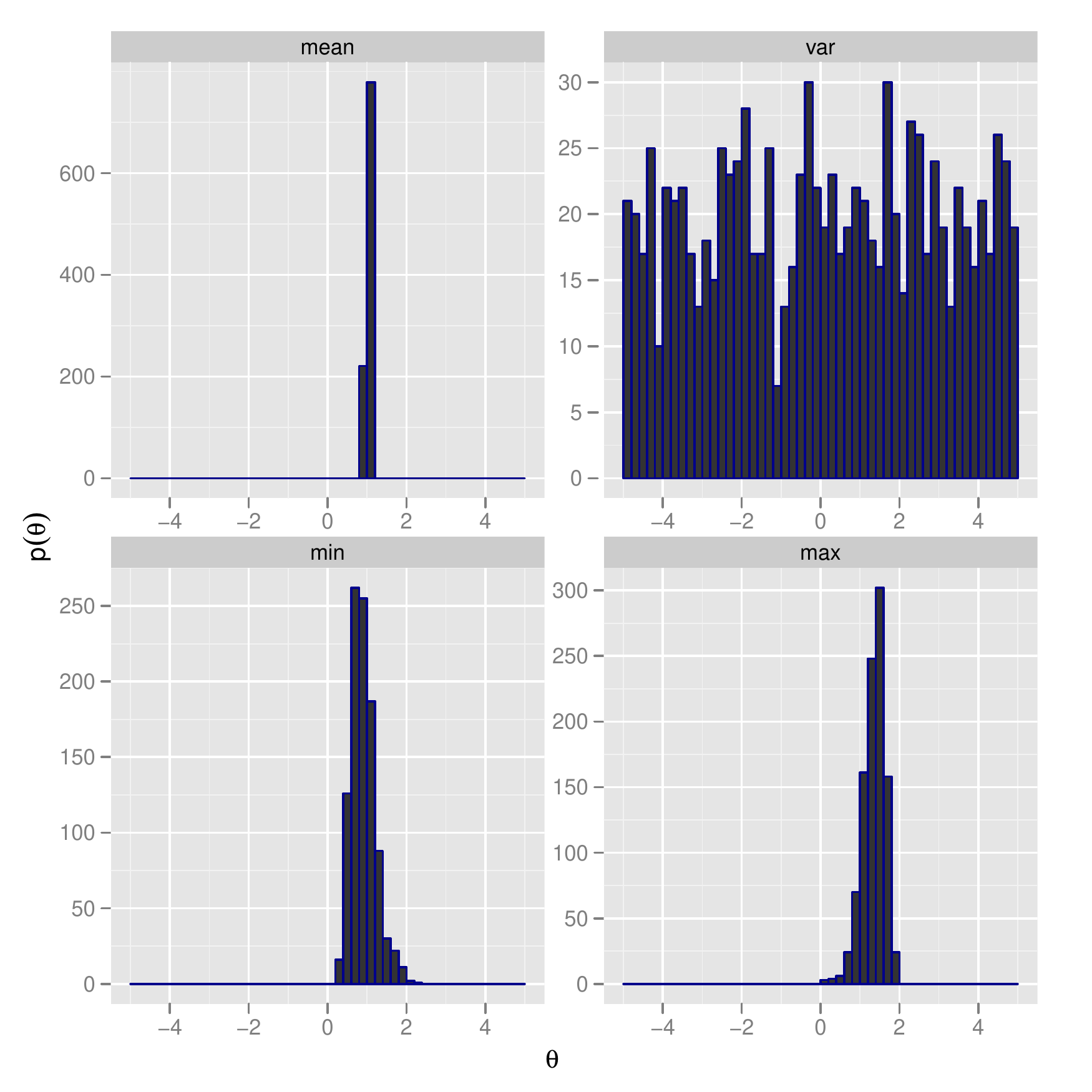}
\caption{Parameter inference for  the mean of a normal model with known standard deviation, $\sigma^2=1$, using the mean, variance, maximum and minimum as a statistic. We find that only the truly sufficient statistic ($\theta=\mu=1$) yields the correct posterior distribution despite the fact that we generated 1,000 acceptances of samples of size 10,000 with $\epsilon=0.001$.} 
\label{fig:stat-param}
\end{center}
\end{figure}
To illustrate this point we now consider a finite set of models, ${\cal M}=\{M_1,\ldots,M_q\}$, each of which has an associated parameter vector $\theta_m\in\Theta_m, \ 1\le m\le q$. We aim to perform inference on the {\em joint space} over models and parameters, $(m,\theta_m)$. \cite{Robert:2011wz} have focused on the Bayes Factors, but, of course, similar problems arise also for the marginal model likelihoods,
\be
p(m|x) =\frac{\int_{\Theta_m} f(x|\theta_m)\pi(\theta_m)d\theta_m \pi(m)}{\sum_{i=1}^q\int_{\Theta_i} f(x|\theta_i)\pi(\theta_i)d\theta_i\pi(i)}.
\ee
Again, we can apply ABC by replacing evaluation of the likelihood in favour of comparing simulated and real data for different parameters drawn from the posterior, whence we obtain
\be
p(m|x) \approx \frac{\int_{\Theta_m} \int_\Omega \mathds{1}(\Delta(x,y)\le \epsilon) f(y|\theta_m)\pi(\theta_m)d\theta_m dy\ \pi(m)}{\sum_{i=1}^q\int_{\Theta_i}  \int_\Omega \mathds{1}(\Delta(x,y)\le \epsilon) f(y|\theta_i)\pi(\theta_i)d\theta_i dy\ \pi(i)},
\ee
which is of course always exact once $\epsilon\longrightarrow 0$. The same is no longer true, however, once the complete data have been replaced by summary statistics. So in general
\be
p(m|x)\neq \frac{\int_{\Theta_m} \int_{\mathds{R}^w} \mathds{1}(\Delta(\St_m(x),s_m\le \epsilon) \tilde{f}(s_m|\theta_m)\pi(\theta_m)d\theta_m ds_m\ \pi(m)}{\sum_{i=1}^q\int_{\Theta_i}  \int_{\mathds{R}^w} \mathds{1}(\Delta(\St_i(x),s_i)\le \epsilon) \tilde{f}(s_i|\theta_i)\pi(\theta_i)d\theta_i ds_i\ \pi(i)};,
\ee
where $\St_i$, $1\leq i\leq q$ are the summary statistics for each model. An equality can only hold if the factors $g_i(x),\ 1\le i\le q$ are all identical. Otherwise the different levels of data-compression are lost and unbiased model selection is no longer possible.
\par
\subsection{Resuscitating ABC Model Selection}  
As shown by \cite{Robert:2011wz} and \cite{Didelot:2010wo} even if we choose a set of statistics that is sufficient for parameter estimation across models, this does not guarantee that the same set of statistics are sufficient for model selection. R \cite{Robert:2011wz} argue that therefore model selection, though not parameter estimation is fraught with problems in an ABC framework. Here we will argue that this is not the case. While we do agree that sufficiency and problems when using inadequate (or {\em insufficient}) statistics for model selection, we maintain that
\begin{itemize}
\item this mirrors problems that can also be observed in the parameter estimation context (see Figure \ref{fig:stat-param}),
\item for many important, and arguably the most important applications of ABC, this problem can in principle be avoided by using the whole data rather than summary statistics,
\item in cases where summary statistics are required, we argue that we can construct approximately sufficient statistics in a disciplined manner,
\item when all else fails, a change in perspective, allows us to nevertheless make use of the flexibility of the ABC framework (see Discussion).
\end{itemize}
The bulk of this article will deal with the derivation and application of a method that constructs summary statistics appropriate for the twin use of parameter estimation and model selection. After this we will briefly return to an outline of alternative approaches and map the applicability of ABC-based model selection more generally.

\section{Some Basic Concepts from Information Theory}
Our method is most easily framed in the terms of information theory \citep{Mackay:2003aa,Cover:2006aa,Mezard:2009aa}, and in order to keep this article self-contained we briefly review some of the basic concepts. We let $X$ denote a discrete random variable with potential states $\mathcal{X}$ and probability mass function $p_X(x)=\pr\left(X=x\right)$, $x\in\mathcal{X}$. In this section, for sake of clarity, we explicitly denote the random variables as subscript of the corresponding probability density.

\subsection{Entropy and Mutual Information}
The {\em entropy} of $X$, denoted by $H$, measures the uncertainty of $X$ and is defined as follows,
$$
H(X) = - \sum_{x} p_X(x) \log p_X(x) = - E_{X}\left[ \log p_X(X) \right] = E_{X} \left[ \frac{1}{\log p_X(X)} \right]\geq 0,
$$
where $E_{X}$ denotes the expectation under the probability mass function $p_X$. 
Let $(X,Y)$ be a pair of discrete random variables with joint distribution $p_{X,Y}$. The {\em conditional entropy} $H(Y|X)$ is defined as
$$
H(Y|X)=-E_{X,Y}\left[ \log p_{Y|X}(Y|X) \right].
$$
The {\em mutual information} $I(X;Y)$ between two discrete random variables $X$ and $Y$ measures the amount of information that $Y$ contains about $X$. It can be seen as the reduction of the uncertainty about $X$ due to the knowledge of $Y$,
$$
I(X;Y) = H(X)-H(X|Y) = \sum_{x,y\in\mathcal{X}} p_{X,Y}(x,y) \log \frac{ p_{X,Y}(x,y) }{ p_X(x) p_Y(y) }= KL( p_{X,Y} || p_Xp_Y )\geq 0,
$$
where $KL(P||Q)$ refers to the {\em Kullback-Leibler} (KL) divergence between probabilities $P$ and $Q$. The mutual information $I(X;Y)$ is equal to $0$ if and only if the random variables $X$ and $Y$ are independent.
\par
The {\em conditional mutual information} of discrete random variables $X$, $Y$ and $Z$ is defined as
$$
I(X;Y|Z)=H(X|Z)-H(X|Y,Z);
$$
it is the reduction in uncertainty of $X$ due to knowledge of $Y$ when $Z$ is given. This quantity is zero if and only if $X$ and $Y$ are conditionally independent given $Z$, which means that $Z$ contains all the information about $X$ in $Y$. The conditional mutual information satisfies the  chain rule: for discrete random variables $X_1,X_2,\cdots,X_n$ and $Y$ we have
$$
I(X_1,X_2,\cdots,X_n;Y)=\sum_{i=1}^nI(X_i;Y|X_1,X_2,\cdots,X_{i-2},X_{i-1}).
$$
In the following we only consider entropy, rather than the differential entropy, which applies to continuous random variables.

\subsection{Data Processing Inequality and Sufficient Statistics}
The {\em data processing inequality} (DPE) states that for random variables $X$, $Y$, and $Z$ such that $X \rightarrow Y \rightarrow Z$, (i.e. $Y$ depends, deterministically or randomly, on $X$ and $Z$ depends on $Y$)
$$
 I(X;Y) \ge I(X;Z),
 $$
with equality only if $X \rightarrow Y \rightarrow Z$ forms a Markov Chain, which means that the random variables $X$ and $Z$ are conditionally independent given $Y$: $p_{X,Z|Y} = p_{X|Y}p_{Z|Y}$.
\par
Now consider a family of distributions $\{ f(\cdot|\theta)\}_{\theta\in\Theta}$ and let $X$ be a sample from a distribution in this family. Let $\St$ be a deterministic statistic and denote by $S$ the random variable such that $S=\St(X)$. Therefore $\theta \rightarrow X \rightarrow S$. By the DPE
$$
I(\theta; S) \le I(\theta;X).
$$
Analogously to the discussion above, a statistic $\St$ is said \textit{sufficient for parameter $\theta$} if and only if $S$ contains all the information in $X$ about $\theta$, \ie 
$$
I(\theta; S) = I(\theta;X)\quad\text{where $S=\St(X)$}
$$ 
Equivalently we may write: 
\begin{theorem}\label{result1}
$\St$ is a sufficient statistic for parameter $\theta$ if and only if
$$ I(\theta; X|S) = 0.$$
In that case, $$E_{\theta,X}\left[ \log\frac{p(\theta|X)}{p(\theta|S)}\right]=0. $$
\end{theorem}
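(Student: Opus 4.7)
The plan is to derive both statements directly from the chain rule for mutual information together with the deterministic nature of $S=\St(X)$.

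First I would write down the chain rule in two different orders for the pair $(X,S)$:
\begin{align*}
I(\theta;X,S) &= I(\theta;S) + I(\theta;X\mid S)\\
I(\theta;X,S) &= I(\theta;X) + I(\theta;S\mid X).
\end{align*}
The key simplification is that $S=\St(X)$ is a deterministic function of $X$, so conditional on $X$ there is no residual uncertainty about $S$, which forces $I(\theta;S\mid X)=0$. Equating the two decompositions yields the identity $I(\theta;X) = I(\theta;S) + I(\theta;X\mid S)$. Combined with the definition of sufficiency given just above the statement, $\St$ sufficient $\Leftrightarrow I(\theta;S)=I(\theta;X)$, this immediately gives the equivalence $\St$ sufficient $\Leftrightarrow I(\theta;X\mid S)=0$.

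For the second identity, I would unpack the definition of conditional mutual information as an expected log-ratio. Writing $I(\theta;X\mid S)$ in its KL form,
\[
I(\theta;X\mid S) = E_{\theta,X,S}\!\left[\log\frac{p(\theta,X\mid S)}{p(\theta\mid S)\,p(X\mid S)}\right] = E_{\theta,X,S}\!\left[\log\frac{p(\theta\mid X,S)}{p(\theta\mid S)}\right].
\]
Because $S$ is a deterministic function of $X$, conditioning on $(X,S)$ is the same as conditioning on $X$, so $p(\theta\mid X,S)=p(\theta\mid X)$. The outer expectation then reduces to one over $(\theta,X)$ since $S$ is determined by $X$, giving
\[
I(\theta;X\mid S) = E_{\theta,X}\!\left[\log\frac{p(\theta\mid X)}{p(\theta\mid S)}\right].
\]
Setting this equal to zero under sufficiency delivers the claimed formula.

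I do not anticipate a serious obstacle: everything is formal manipulation of mutual information via the chain rule plus the observation $I(\theta;S\mid X)=0$ (which is the one place determinism of $\St$ enters). The only point to state carefully is the replacement $p(\theta\mid X,S)=p(\theta\mid X)$, which is legitimate precisely because $S=\St(X)$ adds no $\sigma$-algebraic information beyond $X$; I would flag this explicitly so the reader sees why both halves of the Result rely on the same underlying fact.
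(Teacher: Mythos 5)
Your proof is correct and takes essentially the same route as the paper: the equivalence rests on the chain rule together with the determinism of $S=\St(X)$ (the paper phrases this as $I(\theta;X)=I(\theta;X,S)$ and expands the entropies, which is the same computation as your two-way chain-rule decomposition with $I(\theta;S\mid X)=0$). For the second identity you recognize the expectation as $I(\theta;X\mid S)$ itself via $p(\theta\mid X,S)=p(\theta\mid X)$, while the paper computes it as the difference $I(\theta;X)-I(\theta;S)$ written as a single sum over $(\theta,x)$; by the first part these are the same quantity, so the derivations coincide.
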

From now on, we resume the use of typically bayesian notation (see section 2) where the random variables are no longer given as subscripts but are unambiguously inferred from context. For example, the density of probability $p(\theta|S)$ involved in the theorem designates both the posterior probability of the parameter $\theta$ conditional on $S$ and its value when the parameter is actually equal to $\theta$.
\begin{proof}
By definition, $\St$ is a sufficient statistic if and only if $I(\theta;X) =I(\theta;S)$.
$\St$ being a deterministic fonction of $X$, the previous equation is equivalent to
\begin{align*}
&I(\theta;X,S) - I(\theta;S) = 0\\ 
\Leftrightarrow \quad & H(\theta) -H(\theta|X,S) - H(\theta) + H(\theta|S) = 0 \\
\Leftrightarrow \quad& H(\theta|S) - H(\theta|X,S) = 0\\
\Leftrightarrow \quad&  I(\theta; X|S) = 0
\end{align*}
If $\St$ is a sufficient statistic for parameter $\theta$ then
\begin{align*}
&I(\theta;X)=I(\theta;S)\\
\Leftrightarrow \quad &\sum_{\theta,x}p(\theta,x)\log\frac{p(\theta,x)}{p(\theta)p(x)}=\sum_{\theta,s}p(\theta,s)\log\frac{p(\theta,s)}{p(\theta)p(s)}\\
\Leftrightarrow \quad &\sum_{\theta,x}p(\theta,x)\log\frac{p(\theta,x)}{p(\theta)p(x)}=\sum_{\theta,x}p(\theta,x)\log\frac{p(\theta,\St(x))}{p(\theta)p(\St(x))}\\
\Leftrightarrow \quad &\sum_{\theta,x}p(\theta,x)\log\frac{p(\theta,x)p(\theta)p(\St(x))}{p(\theta)p(x)p(\theta,\St(x))}=0\\
\Leftrightarrow \quad &\sum_{\theta,x}p(\theta,x)\log\frac{p(\theta|x)}{p(\theta|\St(x))}=0
\end{align*}
\end{proof}

\section{Constructing Sufficient Statistics}
We consider the following situation: suppose that we have a finite set of summary statistics $\St=\left\{\St_1,\dots,\St_w\right\}$ and assume that $\St$ is a sufficient statistic. We aim to identify a subset $\U$ of $\St$ which is sufficient for $\theta$. The following result characterizes such a subset.  

\begin{theorem}
Let $\St$ be a finite set of summary statistics of $X$, and assume that $\St$ is a sufficient statistic. Denote by $\U$ a subset of $\St$. For a random variable $X$ distributed according to a distribution parametrized by $\theta$, let $U=\U(X)$ and $S=\St(X)$. The following statements hold
\begin{align*}
 &\text{$\U$ is a sufficient statistic}\\
\Leftrightarrow \quad & I(\theta;S|U)=0\\
\Leftrightarrow \quad & E_{X}\left[KL(p(\theta|S)||p(\theta|U))\right]=0\;.\\ 
\end{align*}
\end{theorem}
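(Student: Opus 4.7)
The plan is to chain together three equivalences, using Result~\ref{result1} as the hook. First I would invoke Result~\ref{result1} directly on the candidate statistic $\U$: by definition $\U$ is sufficient for $\theta$ iff $I(\theta;X\mid U)=0$. So the real work is to replace $X$ by $S$ in this conditional mutual information, and then to re-express the result as an expected Kullback--Leibler divergence.

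For the step $I(\theta;X\mid U)=I(\theta;S\mid U)$, the key observations are that (i) $S$ is a deterministic function of $X$, so $H(S\mid X)=0$ and hence $I(\theta;S\mid X,U)=0$, and (ii) $U$ is a sub-collection of the coordinates of $S$, hence a deterministic function of $S$, so conditioning on $U$ adds nothing to conditioning on $S$. I would then expand $I(\theta;X,S\mid U)$ two ways using the chain rule of conditional mutual information,
\begin{align*}
I(\theta;X,S\mid U) &= I(\theta;X\mid U)+I(\theta;S\mid X,U) \\
&= I(\theta;S\mid U)+I(\theta;X\mid S,U).
\end{align*}
The term $I(\theta;S\mid X,U)$ vanishes by (i). For $I(\theta;X\mid S,U)$, observation (ii) gives $H(\theta\mid S,U)=H(\theta\mid S)$ and $H(\theta\mid X,S,U)=H(\theta\mid X,S)$, so $I(\theta;X\mid S,U)=I(\theta;X\mid S)$, which is zero because $\St$ is assumed sufficient (Result~\ref{result1} applied to $\St$). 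Equating the two expansions yields $I(\theta;X\mid U)=I(\theta;S\mid U)$, which settles the first equivalence.

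For the KL-divergence reformulation, I would unwind the definition of conditional mutual information and exploit that $p(\theta\mid S,U)=p(\theta\mid S)$ (again because $U$ is a function of $S$):
\begin{align*}
I(\theta;S\mid U)
&=\sum_{\theta,s,u} p(\theta,s,u)\log\frac{p(\theta\mid s,u)}{p(\theta\mid u)} \\
&=\sum_{\theta,s,u} p(\theta,s,u)\log\frac{p(\theta\mid s)}{p(\theta\mid u)} \\
&=E_{X}\!\left[\sum_\theta p(\theta\mid S)\log\frac{p(\theta\mid S)}{p(\theta\mid U)}\right] \\
&=E_{X}\!\left[KL\bigl(p(\theta\mid S)\,\|\,p(\theta\mid U)\bigr)\right],
\end{align*}
where in the penultimate step I rewrite the sum as an expectation under the joint law of $(S,U)$, which is the law of $(\St(X),\U(X))$ under the marginal of $X$. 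Since the KL divergence is non-negative, its expectation is zero iff the integrand vanishes almost surely, which gives the last equivalence.

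I expect the main obstacle to be the bookkeeping in the chain-rule step: one has to keep track of two different reasons that conditional mutual informations vanish (deterministic dependence of $S$ on $X$ versus sufficiency of $\St$), and to remember that $U$ is a deterministic function of $S$ so that conditioning on $(S,U)$ is the same as conditioning on $S$. Once this is in place, Result~\ref{result1} and an algebraic rewrite of the conditional mutual information deliver both equivalences essentially for free.
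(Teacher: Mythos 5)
Your proof is correct and follows essentially the same route as the paper's: the KL identity is obtained by the identical computation (unwinding $I(\theta;S\mid U)$ and using $p(\theta\mid s,u)=p(\theta\mid s)$ because $U$ is a function of $S$), and your first equivalence uses the same ingredients --- sufficiency of $\St$, $U$ a deterministic function of $S$, $S$ a deterministic function of $X$ --- merely repackaged through a double chain-rule expansion of $I(\theta;X,S\mid U)$ instead of the paper's direct identity $I(\theta;S\mid U)=I(\theta;S)-I(\theta;U)$. Both arguments are valid and I see no gaps.
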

 \begin{proof}
By definition of the conditional mutual information 
\begin{align*}
I(\theta;S|U)&=H(\theta|U)-H(\theta|S,U)=H(\theta)-H(\theta|S,U)-H(\theta)+H(\theta|U)\\
&=I(\theta;U,S)-I(\theta;U)=I(\theta;S)-I(\theta;U)
\end{align*}
since $U$ is a vector composed of elements of $S$. The statistic $\St$ being sufficient, $I(\theta;S)=I(\theta;X)$, and therefore $I(\theta;S|U)=0$ if and only if $\U$ is a sufficient statistic.
Denote by $h$ the function such that for all $x$, $u=\U(x)=h(S(x))$,
\begin{align*}
 I(\theta;S|U) &=H(\theta|U)-H(\theta|S,U)=\sum_{\theta,s,u}p(\theta,s,u)\log\left(\frac{p(\theta|u,s)}{p(\theta|u)}\right)\\
 &=\sum_{\theta,s}p(\theta,s)\log\left(\frac{p(\theta|h(s),s)}{p(\theta|h(s))}\right)=\sum_{\theta,s}p(\theta,s)\log\left(\frac{p(\theta|s)}{p(\theta|h(s))}\right)\\
 &=\sum_{s}p(s)\sum_\theta p(\theta|s)\log\left(\frac{p(\theta|s)}{p(\theta|h(s))}\right)=\sum_{s}p(s)KL\left(p(\theta|s)|| p(\theta|h(s)) \right)\\
 &=\sum_{x}p(x)KL\left(p(\theta|\St(x))|| p(\theta|\U(x)) \right)=E_{X }\left[KL\left(p(\theta|S)|| p(\theta|U) \right)\right]\;.
\end{align*}
\end{proof}

According to this result identifying a sufficient statistic with minimum cardinality from a sufficient family $\St$ of summary statistics boils down to identifying the smallest subset $\U$ of $\St$ such that $I(\theta;S|U)=0$ where $U=\U(X)$ and $S=\St(X)$ or, equivalently, 
$E_{X }\left[KL(p(\theta|S)||p(\theta|U))\right]=0$. 
Here we aim to determine a sufficient statistic for Approximate Bayesian Computation (ABC) methods for parameter inference and then for model selection. We focus in this section on the parameter inference task.  
In the ABC framework, the expectation over the data of the Kullback-Leibler divergence \citep{Cover:2006aa} between the two posterior distributions $p(\theta|S)$ and $p(\theta|U)$ cannot be exactly computed since we only have at our disposal a dataset $x$ and the value of the statistics for this dataset $s^*=(s_1^*,\dots,s_w^*)=\St(x)$. Thus, we approximate it by the expectation with respect to the empirical measure of the data. The method is summarized in Algorithm~\ref{algo:min}. We denote by $|U|$ the cardinality of a set $U$. 
\begin{algorithm}[h!]
  \begin{algorithmic}[1]
    \caption{Minimization of the mutual information}
    \label{algo:min}
    \State {\bfseries input:} a sufficient set of statistics whose values on the dataset is $s^*=\left\{s_1^*,\dots,s_w^*\right\}$
    \State {\bfseries output:} a subset $U^*$ of $s^*$
    \For{ all $u^*\subset s^*$} 
    \State perform ABC to obtain $\hat{p}(\theta|u^*)$
    \EndFor
    \State let $T^* = \left\{ u^*\subset s^*\text{ such that } KL\left(\hat{p}(\theta|s^*)|| \hat{p}(\theta|u^*)\right)=0  \right\}$
    \State  {\bfseries return } $U^*=\argmin_{u^*\in T^*} |u^*|$
  \end{algorithmic}
\end{algorithm}

This methodology is computationally prohibitive since it enumerates all possible subsets $u^*$ of $s^*$ and perform the ABC algorithm for all of these. 
Moreover, it is challenging to obtain a precise estimate of the posterior $\hat{p}(\theta|u^*)$ of $\theta$ given a value of the statistics $U=u^*$ when the cardinality of $u^*$ is large. In particular, it is often impossible to obtain an estimate of $p(\theta|s^*)$. It is then necessary to design an algorithm which does not need the computation of this probability. The following result provides a first step into this direction.

\begin{theorem}\label{result:dim}
Let $X$ be a random variable generated according to $f(\cdot|\theta)$. Let $\St$ be a sufficient statistic and $\U$ and $\T$ two subsets of $\St$ such that $U=\U(X)$, $T=\T(X)$ and $S=\St(X)$ satisfy $U\subset T \subset S$. We have 
$$I(\theta;S|T)=I(\theta;S|U)-I(\theta;T|U) \;.$$
\end{theorem}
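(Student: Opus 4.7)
The plan is to derive the identity as a direct application of the chain rule for conditional mutual information, exploiting the nested inclusions $U\subset T\subset S$ which, because each smaller set is built by selecting coordinates of the larger one, imply that $U$ is a deterministic function of $T$ and $T$ is a deterministic function of $S$.

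First I would apply the chain rule to the pair $(S,T)$ conditioned on $U$ in two different orders:
\begin{align*}
I(\theta;S,T\mid U) &= I(\theta;S\mid U) + I(\theta;T\mid S,U) \\
                    &= I(\theta;T\mid U) + I(\theta;S\mid T,U).
\end{align*}
Both decompositions are standard consequences of writing conditional mutual information in terms of conditional entropies (as was done in the proof of the previous result).

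Next I would simplify the two ``extra'' terms using the deterministic relationships. Since $T\subset S$, the value of $S$ completely determines $T$, so $H(T\mid S,U)=0$ and therefore $I(\theta;T\mid S,U)=0$. Similarly, since $U\subset T$, conditioning on $T$ already contains all the information in $U$, so $H(\theta\mid T,U)=H(\theta\mid T)$ and likewise $H(\theta\mid S,T,U)=H(\theta\mid S,T)=H(\theta\mid S)$ (the last equality again using $T\subset S$). Hence $I(\theta;S\mid T,U)=I(\theta;S\mid T)$.

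Substituting these simplifications into the two-way chain-rule equality gives
\[
I(\theta;S\mid U) \;=\; I(\theta;T\mid U) + I(\theta;S\mid T),
\]
which rearranges exactly to the claimed $I(\theta;S\mid T)=I(\theta;S\mid U)-I(\theta;T\mid U)$. I do not expect any serious obstacle here: the only care required is in invoking the deterministic relationships correctly so that the two redundant terms vanish or collapse. The result is essentially bookkeeping on top of the chain rule, and mirrors the style of manipulation already carried out in the proof of the preceding theorem.
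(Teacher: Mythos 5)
Your proof is correct and rests on exactly the same ingredients as the paper's: the identities $H(\theta\mid S,T)=H(\theta\mid S)$ (since $T$ is a function of $S$) and $H(\theta\mid T,U)=H(\theta\mid T)$ (since $U$ is a function of $T$), assembled by elementary conditional-entropy bookkeeping. Organizing the computation through the two-way chain-rule expansion of $I(\theta;S,T\mid U)$ rather than the paper's direct subtraction of $I(\theta;S\mid T)=H(\theta\mid T)-H(\theta\mid S)$ and $I(\theta;S\mid U)=H(\theta\mid U)-H(\theta\mid S)$ is a purely presentational difference.
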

\begin{proof}
For all subset $T$ of $S$,
$$ I(\theta;S|T)=H(\theta|T)-H(\theta|S,T)=H(\theta|T)-H(\theta|S)\;.$$
Then, for $U$ and $T$ subsets of $S$,
$$ I(\theta;S|T)-I(\theta;S|U)=H(\theta|T)-H(\theta|U)\;.$$
If $U$ in included in $T$, then $H(\theta|T)=H(\theta|T,U)$ and
$$ I(\theta;S|T)-I(\theta;S|U)=-I(\theta;T|U)$$
which proves the result.
\end{proof}
It follows that the information contained in $S$ on $\theta$ given $T$ is smaller than the information contained in $S$ on $\theta$ given $U\subset T$. In order to construct a subset $\T$ of $\St$ such that $I(\theta;S|T)=0$, where $S=\St(X)$ and $T=\T(X)$, it is thus sufficient to add one by one statistics of $\St$ until the condition holds. Indeed, if we denote the statistic added at time step $k\le w$ by $\St_{(k)}$ and $S_{(k)}=\St_{(k)}(X)$, then
$$
I(\theta;S|S_{(1)},\dots,S_{(k)})\leq I(\theta;S|S_{(1)},\dots,S_{(k+1)})\;.
$$
And there exists an integer $k\leq w$ such that $I(\theta;S|S_{(1)},\dots,S_{(k)})=0$. According to result~\ref{result:dim}, at each time $k$, 
$$
I(\theta;S|S_{(1)},\dots,S_{(k)})=I(\theta;S|S_{(1)},\dots,S_{(k-1)})-I(\theta;S_{(1)},\dots,S_{(k)}|S_{(1)},\dots,S_{(k-1)}) \;.
$$ 
The mutual information is a non negative function, then, in order to decrease as much as possible $I(\theta;S|S_{(1)},\dots,S_{(k)})$, the added statistic at time $k> 1$ should be such that 
\begin{align*}
S_{(k)}&=\argmax_{V\in S\backslash \{S_{(1)},\dots,S_{(k-1)}\}}I(\theta;S_{(1)},\dots,S_{(k-1)},V|S_{(1)},\dots,S_{(k-1)})\\
&= \argmax_{V\in S\backslash \{S_{(1)},\dots,S_{(k-1)}\}}E_{X}\left[KL\left(p(\theta|S_{(1)},\dots,S_{(k-1)},V)||p(\theta|S_{(1)},\dots,S_{(k-1)})  \right)\right]\;.
\end{align*}
As previously mentioned, in practice, this expectation can not be computed. We hence replace it by the expectation with respect to the empirical measure of the data, leading to the approximation, for all $1\leq k\leq n$,
\begin{multline*}
 \argmax_{V\in S\backslash \{S_{(1)},\dots,S_{(k-1)}\}}E_{p(X)}\left[KL\left(p(\theta|S_{(1)},\dots,S_{(k-1)},V)||p(\theta|S_{(1)},\dots,S_{(k-1)})  \right)\right]\\ \approx \argmax_{v^*\in s^*\backslash \{s_{(1)}^*,\dots,s_{(k-1)}^*\}}KL\left(p(\theta|s_{(1)}^*,\dots,s_{(k-1)}^*,v^*)||p(\theta|s_{(1)}^*,\dots,s_{(k-1)}^*)  \right)\;,
\end{multline*}
where $s^*$ is the value of the statistic $\St$ on the dataset. In addition, the first statistic $\St_{(1)}$ should contain the maximum information about $\theta$, in the sense that
$$
S_{(1)}=\argmax_{V\in S} I(\theta;V) = \argmin_{V\in S} H(\theta|V)=\argmax_{V\in S}E_{p(\theta,V)}\left[ \log p(\theta|V) \right]\;. 
$$ 
This results in algorithm~\ref{algo1}.

\begin{algorithm}[h!]
  \begin{algorithmic}[1]
    \caption{Greedy minimization of the mutual information}
    \label{algo1}
    \State {\bfseries input:} a sufficient set of deterministic statistics whose values on the dataset is $s^*=\left\{s_1^*,\dots,s_w^*\right\}$
\State {\bfseries output:} a subset $U^*$ of $s^*$
    \State for all $u^*\in s^*$, perform ABC to obtain $\hat{p}(\theta|u^*)$
    \State let $s^*_{(1)} = \argmax_{u^*\in s^*} \log \hat{p}(\theta|u^*)$
    \For{$k\in \{2,\dots,w\}$}
    \State for all $u^*\in s^* \backslash \left\{s_{(1)}^*,\dots,s_{(k-1)}^*\right\}$, perform ABC to obtain $\hat{p}(\theta|s_{(1)}^*,\dots,s_{(k-1)}^*,u^*)$
    \State let 
\begin{equation}
s_{(k)}^*=\argmax_{u^*}KL(\hat{p}(\theta|s_{(1)}^*,\dots,s_{(k-1)}^*,u^*)||\hat{p}(\theta|s_{(1)}^*,\dots,s_{(k-1)}^*))
\label{eq:maxstep}
\end{equation}
\If{$KL(\hat{p}(\theta|s_{(1)}^*,\dots,s_{(k)}^*)||\hat{p}(\theta|s_{(1)}^*,\dots,s_{(k-1)}^*))\leq\epsilon $}
\State {\bfseries return} $U^*=(s_{(1)}^*,\dots,s_{(k-1)}^*))$
\EndIf
\EndFor
\State {\bfseries return} $U^*=s^*$

  \end{algorithmic}
\end{algorithm}

This algorithm is computationally expensive if the cardinality of the set of statistics $|\St|=w$ is large. Indeed at each iteration $k$, one performs the ABC algorithm $w-k+1$ times. A simplification of it consists in replacing the maximization step (see equation~\eqref{eq:maxstep}) by testing randomly various statistics and choosing a statistic $\St_{(k)}$ such that $I(\theta;S_{(1)},\dots,S_{(k)}|S_{(1)},\dots,S_{(k-1)})$ is large. Different criteria may be used to determine if the mutual information is large or not, and then decide if the statistic should be included or not. Most of these criteria consist of determining if the posterior probability of $\theta$ given $S_{(1)},\dots,S_{(k-1)}$ and the posterior probability of $\theta$ given $S_{(1)},\dots,S_{(k)}$ are significantly different. If so, adding the statistic $\St_{(k)}$ is justified, otherwise we do not add it and instead turn to a different statistic. In the algorithm~\ref{algo2}, we denote by $\mathcal{C}\left[p(\theta|s_{(1)}^*,\dots,s_{(k)}^*),p(\theta|s_{(1)}^*,\dots,s_{(k-1)}^*)\right]$ the criterion which is equal to $1$ if the statistic $\St_{(k)}$ should be added and $0$ otherwise. 
In practice, we can use, for instance, one of the following measures:
\begin{itemize}
 \item we add the statistic $\St_{(k)}$ if $KL(p(\theta|s_{(1)}^*,\dots,s_{(k)}^*)||p(\theta|s_{(1)}^*,\dots,s_{(k-1)}^*))\geq \delta_k$ where $\delta_k$ is a threshold which may be computed by bootstrapping the data to estimate  $E_{p(X)}\left[p(\theta|S_{(1)},\dots,S_{(k-1)})\right]$: 
 $$
 \delta_k= KL\left(p(\theta|s_{(1)}^*,\dots,s_{(k)}^*)||E_{p(X)}\left[p(\theta|S_{(1)},\dots,S_{(k-1)})\right]\right)\;,
 $$
\item the Kolmogorov-Smirnov test enables us to compare $p(\theta|s_{(1)}^*,\dots,s_{(k-1)}^*)$ and $p(\theta|s_{(1)}^*,\dots,s_{(k)}^*)$ and so, the statistic $\St_{(k)}$ is added if the test has a $p$-value smaller than a certain threshold, say $0.01$.
\end{itemize}
\begin{algorithm}[h!]
  \begin{algorithmic}[1]
    \caption{Stochastic minimization of the mutual information}
    \label{algo2}
    \State {\bfseries input:} a sufficient set of deterministic statistics whose values on the dataset is $s^*=\left\{s_1^*,\dots,s_w^*\right\}$
\State {\bfseries output:} a subset $V^*$ of $s^*$
    \State choose randomly $u^*$ in $s^*$
\State $T^*\leftarrow s^* \backslash \left\{u^*\right\}$
\State $V^*\leftarrow u^*$
    \Repeat
    \Repeat
    \If{$T^*=\O$} {\bfseries return} $V^*$\EndIf
    \State choose randomly $u^*$ in $T^*$
    \State $T^*\leftarrow T^*\backslash u^*$
    \State perform ABC to obtain $\hat{p}(\theta|V^*,u^*)$
\Until{ $\mathcal{C}\left[\hat{p}(\theta|V^*,u^*),\hat{p}(\theta|V^*)\right]=1$}
\State $T^*\leftarrow s^* \backslash \left\{V^*,u^*\right\}$
\State {\itshape optionally:} $V^*\leftarrow\textit{OrderDependency}\left(V^*,u^*\right)$ and $T^*\leftarrow s^* \backslash V^*$
\State $V^*\leftarrow V^*\cup u^*$
\Until {$T^*=\O$} 
\State{\bfseries return} $V^*$
  \end{algorithmic}
\end{algorithm}
The order in which the statistics are added matters \citep{Joyce:2008vm,Nunes:2010dv} and the only way to avoid this inconvenience is to use the computationally expensive algorithm~\ref{algo1}. Nevertheless, we suggest to test for order dependency before deciding on whether to add a statistic. This consists of determining if, given the recently added statistic $u^*$, the previously added statistics in $V^*$ still bring relevant information or are not necessary anymore and hence may be released from the set under construction. More precisely, after line 15 of the algorithm~\ref{algo2}, we add the function described in algorithm~\ref{algo:BackRej}.
\begin{algorithm}[h!]
  \begin{algorithmic}[1]
    \caption{Order Dependency}
    \label{algo:BackRej}
    \State {\bfseries Input:} A set of accepted statistics $V^*=\{s^*_{(1)},\dots,s^*_{(k-1)}\}$ and the last accepted statistic $u^*$
    \State {\bfseries Output:} A subset $U^*$ of $\left\{V^*\cup u^*\right\}$
    \State $U^*\leftarrow u^*$
    \For{$i\in\{1,\dots,k-1\}$}
    \If{$\mathcal{C}(p(\theta|U^*,s_{(i)}^*),p(\theta|U^*))=1$}
    \State $U^*\leftarrow U^*\cup s_{(i)}^*$
    \EndIf
    \EndFor
\State{\bfseries return} $U^*$
  \end{algorithmic}
\end{algorithm}

\section{Relation to previous work}
The presented information theoretic framework builds upon two previous methods for summary statistic selection. Our main contributions are the generalisation of the notion of approximate sufficiency, rigorous derivations of algorithms using information theory and the application of summary statistic selection for the joint space.
\par
\cite{Joyce:2008vm} developed a notion of approximate sufficiency for parameter inference and presented a sequential algorithm to score statistics according to whether their inclusion would improve the inference. Their sequential algorithm resembles (and indeed inspired) Algorithm 3 although they do not retain statistics once they have failed to be added which we feel is required since whether a statistic is added depends strongly on the statistics already accepted. Their rule for adding statistics is essentially an approximate test for independence on the posterior distribution under the addition of a new statistic but can only be used for single parameter models. We have shown that the true stopping criterion should be the change in KL divergence which can be used for multivariate posteriors, although tests for independence (KS, $\chi^2$) can be used as an approximation in single parameter models.
\par
\cite{Nunes:2010dv} proposed a heuristic algorithm to minimize the entropy of the posterior with respect to sets of summary statistics \cite{}. Additionally they proposed a refinement step where the set of statistics that minimised the posterior mean root sum of squared errors (MRSSE) was selected. The minimum entropy approach is related to Algorithm 1 since, when $H(\theta)$ is constant, minimising the entropy maximises the mutual information. However, assuming there exists a sufficient statistic, choosing the set of statistics that minimises the entropy is guaranteed to give sufficiency but not {\em minimal} sufficiency since adding a statistic to a sufficient set can reduce the entropy by chance (a manifestation of ``conditioning always reduces entropy").

\section{Automated selection of summary statistics for model selection}
As pointed out recently sufficiency across models is still not sufficient to perform reliably model choice in the ABC framework. Here we show how a natural extension of the methodology introduced above can also be employed in order to construct sets of statistics that are sufficient for ABC-based model selection, when it is impractical to use the raw data \citep{Toni:2010kt}.
\par
Consider $q$ models, each with an associated set of parameters $\Theta_i, i \in \{1, ..., q\}$. We aim to identify a set of sufficient statistic for model selection. Let $M$ being a random variable taking value in $\{1,\dots,q\}$. A statistic is sufficient for model selection if and only if it is sufficient for the joint space $\{M,\{\theta_i\}_{1\leq i\leq q}\}$. According to result~\ref{result1}, this means that a summary statistic $\St$ is sufficient for model selection if and only if $I(M, \theta_{1}, ...., \theta_{q};X|S)=0$ where $S=\St(X)$ and $X$ is a sample from a distribution in the family $\{f(\cdot|\theta_i)\}_{\theta_i\in\Theta_i,\: 1\leq i\leq q}$. The following result enables us to link this condition with the sufficiency for parameter inference for each model.
\begin{theorem}
For all deterministic statistic $\St$, 
$$ I(M, \theta_{1}, ...., \theta_{q};X|S) = I(M; X | \theta_{1}, ...., \theta_{q},S) + \sum_{i} I( \theta_{i}; X|S)\;, $$
where $S=\St(X)$ and $X$ is a sample from a distribution in the family $\{f(\cdot|\theta_i)\}_{\theta_i\in\Theta_i,\: 1\leq i\leq q}$.
\end{theorem}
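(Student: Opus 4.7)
The plan is to prove the identity by two successive applications of the chain rule for conditional mutual information, together with the independence structure that is natural to the model-selection setup. First, I would isolate the model indicator $M$ using the standard chain rule
\[
I(A, B; C \mid D) = I(A; C \mid D) + I(B; C \mid A, D),
\]
instantiated with $A = (\theta_1, \ldots, \theta_q)$, $B = M$, $C = X$ and $D = S$. This immediately delivers the term $I(M; X \mid \theta_1, \ldots, \theta_q, S)$ on the right-hand side and reduces the problem to establishing that $I(\theta_1, \ldots, \theta_q; X \mid S) = \sum_{i=1}^{q} I(\theta_i; X \mid S)$.

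For the second step I would iterate the chain rule,
\[
I(\theta_1, \ldots, \theta_q; X \mid S) = \sum_{i=1}^{q} I\!\left(\theta_i;\, X \mid S, \theta_1, \ldots, \theta_{i-1}\right),
\]
and then invoke the generative structure implicit in the ABC model-selection framework: the parameters of the competing models are drawn independently under a product prior $\pi(m)\prod_i \pi_i(\theta_i)$, and conditional on $M=m$ the data $X$ is sampled from $f(\cdot\mid\theta_m)$, so $X$ does not depend on $\theta_j$ for $j\neq m$. Exploiting these conditional independencies, each summand $I(\theta_i; X \mid S, \theta_1, \ldots, \theta_{i-1})$ should reduce to $I(\theta_i; X \mid S)$, delivering the desired decomposition.

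The main obstacle is precisely this last reduction. Prior independence of the $\theta_i$ does not automatically give posterior independence, since the common child $X$ can induce ``explaining away'' once we have conditioned on it; so the reduction is not a direct consequence of the chain rule alone. The careful part of the argument will therefore be to show that, in the present generative model, conditioning on $\theta_1, \ldots, \theta_{i-1}$ in addition to $S$ contributes nothing beyond what $\theta_i$ already encodes about $X$. If a clean conditional-independence argument is not available directly, an equivalent entropy-based route is to write the two sides as $H(\theta_1, \ldots, \theta_q \mid S) - H(\theta_1, \ldots, \theta_q \mid X, S)$ and $\sum_i \bigl[H(\theta_i \mid S) - H(\theta_i \mid X, S)\bigr]$ respectively, and then verify that the total-correlation terms cancel under the factorization $\pi(m)\prod_i \pi_i(\theta_i)$ and the fact that only the active parameter $\theta_M$ enters the likelihood.
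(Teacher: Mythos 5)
Your proposal follows exactly the route the paper itself takes: the chain rule peels off $I(M;X\mid\theta_1,\dots,\theta_q,S)$ and leaves $\sum_i I(\theta_i;X\mid S,\theta_1,\dots,\theta_{i-1})$, and everything then hinges on replacing each summand by $I(\theta_i;X\mid S)$. The paper simply asserts this replacement as a ``fact''; you have correctly identified it as the crux, but your proposal does not close it, and the explaining-away concern you raise is not a technicality that a more careful argument will dissolve --- it is a genuine obstruction. Summing the chain-rule terms, the replacement is equivalent to $I(\theta_1,\dots,\theta_q;X\mid S)=\sum_i I(\theta_i;X\mid S)$, which (using that $S$ is a deterministic function of $X$) is equivalent to the total correlation of $(\theta_1,\dots,\theta_q)$ conditional on $X$ equalling that conditional on $S$. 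Prior independence buys you nothing here: the likelihood of the joint parameter is the mixture $\sum_m \pi(m) f(x\mid\theta_m)$, which does not factorize over the $\theta_i$, so conditioning on the common child $X$ generically couples parameters that were independent a priori.

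To see that the step really fails without further hypotheses, take $q=2$, $M$ uniform on $\{1,2\}$, $\theta_1,\theta_2$ independent and uniform on $\{0,1\}$, $X=\theta_M$, and $S$ constant. Then $H(X\mid\theta_1,\theta_2)=1/2$ bit and $H(X\mid\theta_1)\approx 0.811$ bits, so $I(\theta_2;X\mid\theta_1)\approx 0.311$ bits while $I(\theta_2;X)=1-0.811\approx 0.189$ bits; the two sides of the claimed identity evaluate to $1$ bit and $\approx 0.878$ bits respectively. So the reduction cannot be derived from the chain rule plus prior independence alone; it requires an additional conditional-independence assumption (essentially that $\theta_1,\dots,\theta_q$ remain mutually independent given $X$ and given $S$), which neither your argument nor the paper supplies. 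Your instinct to distrust this step, and your suggestion to track the total-correlation terms explicitly, is exactly the right diagnostic --- carried through, it shows what extra hypothesis the statement needs rather than completing the proof as stated.
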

\begin{proof}
From the chain rule of mutual information we have that
$$I(M, \theta_{1}, ...., \theta_{q};X|S) = I(M; X | \theta_{1}, ...., \theta_{q},S)+\sum_{i} I( \theta_{i}; X|\theta_1,\cdots,\theta_{i-1},S)\;.$$
The result then follows from the fact that, for all $i$, $I(\theta_{i}; X | \theta_1,\cdots,\theta_{i-1},S) = I(\theta_{i}; X|S)$.
\end{proof}
The mutual information being always non negative, this shows that a statistic $\St$ is sufficient for model selection if and only if $I(M; X | \theta_{1}, ...., \theta_{q},S)=0$ and $I( \theta_{i}; X|S)=0$ for all $i\in\{1,\cdots,q\}$. Therefore, a sufficient statistic for model selection is sufficient for parameter inference in each model and, given the parameter values for every models, the statistic is sufficient for inferring the model. Thus, in order to identify a sufficient statistic for model selection, one should determine a set of minimal sufficient statistics $\St^{m_i}$ for each model $1\leq i\leq q$ and then identify among all statistics $\T$ containing $\cup_{i\in\{1\cdots q\}}\St^{m_i}$ the one with the smallest cardinality such that $I(M; X | \theta_{1}, ...., \theta_{q},T)=0$. The method, summarized in algorithm~\ref{algo:model}, consists in running one of the previous algorithms, for example algorithm~\ref{algo2} and then add statistics which bring new information about the models in the sense that the posterior probability of the model conditional on the  statistics varies significantly if we add a this new statistic.

\begin{algorithm}[h!]
  \begin{algorithmic}[1]
    \caption{Stochastic minimization of the mutual information for model selection}
    \label{algo:model}
    \State {\bfseries input:} a sufficient set of deterministic statistics whose values on the dataset is $s^*=\left\{s_1^*,\dots,s_w^*\right\}$
 \State {\bfseries output:} a subset $V^*$ of $s^*$ which is sufficient for model selection
\For{$i\in\{1,\dots,q\}$}
\State determine a sufficient statistic whose value on the dataset is $S^{m_i}\subset s^*$ using algorithm~\ref{algo2}
\EndFor
\State Let $M^*=\cup_{1\leq i\leq q}S^{m_i}$
\State Let $W^*\leftarrow s^*\backslash M^*$
    \State choose randomly $u^*$ in $W^*$
\State $T^*\leftarrow W^* \backslash \left\{u^*\right\}$
\State $V^*\leftarrow u^*$
    \Repeat
    
    \Repeat
    \If{$T^*=\O$} {\bfseries return} $V^*$
\EndIf
    \State choose randomly $u^*$ in $T^*$
    \State $T^*\leftarrow T^*\backslash u^*$
    \State perform ABC to obtain $\hat{p}(M|\theta_1,\dots,\theta_q,M^*,V^*,u^*)$
\Until{ $\mathcal{C}\left[\hat{p}(M|\theta_1,\dots,\theta_q,M^*,V^*,u^*),\hat{p}(M|\theta_1,\dots,\theta_q,M^*,V^*)\right]=1$}
\State $T^*\leftarrow W^* \backslash \left\{V^*,u^*\right\}$
\State {\itshape optionally:} $V^*\leftarrow\textit{OrderDependenceModelSelec}\left(V^*,M^*\cup u^*\right)$ and $T^*\leftarrow W^* \backslash V^*$
\State $V^*\leftarrow V^*\cup u^*$
	\Until{$T^*=\O$} 
\State{\bfseries return} $V^*$

  \end{algorithmic}
\end{algorithm}
Similarly to algorithm~\ref{algo2}, it is possible to test for order dependency before deciding to add a statistic. To do so, we apply the algorithm~\ref{algo:BackRej} in which at step 3,  the set $U^*$ is initialized by $M^*\cup u^*$ such that we always keep the set $M^*$ and condition line 5 is replaced by $\mathcal{C}\left[\hat{p}(M|\theta_1,\dots,\theta_q,U^*,s_{(i)}^*),\hat{p}(M|\theta_1,\dots,\theta_q,U^*)\right]=1$.

\section{Applications}
We illustrate the framework developed above in three different contexts. First we consider a simple model selection problem involving two normal distributions. We then consider  a typical population genetics example on three demographic scenarios for simulated data, before finally turning to a problem where we consider alternative random walk models; this last example should be typical for applications where likelihood-based inferences are out of the question due to the complexity of the models and the data.
\subsection{Normal example}
The developed framework was illustrated on a simple example with two models
\begin{equation*}
y_1, ... y_d \sim \mathcal{N}(\mu,\sigma^2_{1}) \; \text{ and } \; y_1, ... y_d \sim \mathcal{N}(\mu,\sigma^2_{2}),
\end{equation*}
where the variances of the normal distributions, $\sigma_1$ and $\sigma_2$, are fixed. Under a conjugate prior, $\mu \sim \mathcal{N}(0,a^2)$, the true Bayes factor is given by
\begin{equation*}
BF( \mathbf{y} )= \frac{ \sigma_{1}^{-d} \exp\{ -S^2/2\sigma_{1}^{2}\}\exp\{ -\bar{y}^2/2(a^{2} + \sigma_{1}^{2}/d )\} \sqrt{ a^{-2} + d\sigma_{2}^{-2} } }{\sigma_{2}^{-d} \exp\{ -S^2/2\sigma_{2}^{2}\}\exp\{ -\bar{y}^2/2(a^{2} + \sigma_{2}^{2}/d )\} \sqrt{ a^{-2} + d\sigma_{1}^{-2} } },
\end{equation*}
where
\begin{equation*}
\bar{y} = d^{-1} \sum_{i=1}^d y_i \; \text{ and } \; S^2 = \sum_{i=1}^{d} (y_i - \bar{y})^2.
\end{equation*}
In this case $\bar{y}$ is sufficient for parameter inference but the pair $\{ \bar{y}, S^2 \}$ is sufficient for the joint space.
\par
To test the automated choice of approximate summary statistics, 100 data sets were sampled under model 1 and the algorithm run to select statistics for parameter inference and for the joint space from a pool of 5 statistics including $\bar{y}$, $S^2$, range, maximum and a non informative statistic $u \sim \mathcal{U}(0,2)$. The values of the parameters were chosen to be $\sigma_1 = 0.3$, $\sigma_2 = 0.6$, $n = 15$ and $a=2$. Since there are difficulties that arise from the possibly very different scales of the summary statistics the distance was defined as
$$ \Delta( \St(x), \St(y) ) = \sum_{i} [ \log \St_{i}(x) - \log \St_{i}(y) ]^2,$$
which accounts for the relative difference between data and simulation and avoids the need to known the scales of the statistics {\em a priori}. Here, $\St_i(x)$ denotes the $i$-th component of $\St(x)\in\mathds{R}^w$The stopping criteria were defined through tests for independence ($p<1\times10^{-5}$) between the posterior distributions under different summary statistics; a Kolmogorov-Smirnov test in the case of the continuous parameter posterior and a Pearson test in the case of the discrete model posterior. The ABC was run with 500 particles with $\epsilon = 0.1$.
\par
Figure \ref{fig:NormalStats} shows the results of the summary statistic selection over the 100 runs. Figure \ref{fig:NormalStats} (left) shows the results for parameter inference across the two models. The mean was selected in every replicate, the maximum in 14 replicates and $S^2$ once. Figure \ref{fig:NormalStats} (right) shows the additional statistics selected for the joint space. $S^2$ was selected in 84 cases, the range in 19 cases, the maximum in 9 cases and the noise statistic once. Figure \ref{fig:NormalBF} shows the Bayes Factor obtained via ABC versus the analytical prediction. As expected the Bayes factor calculated using only the statistics selected for parameter inference is uncorrelated with the true Bayes factor (figure on the left). When the statistics sufficient for the joint space are included the ABC Bayes factor correlates well with the analytical prediction.
\begin{figure}[p]
\begin{center}
\includegraphics[width=0.45\textwidth]{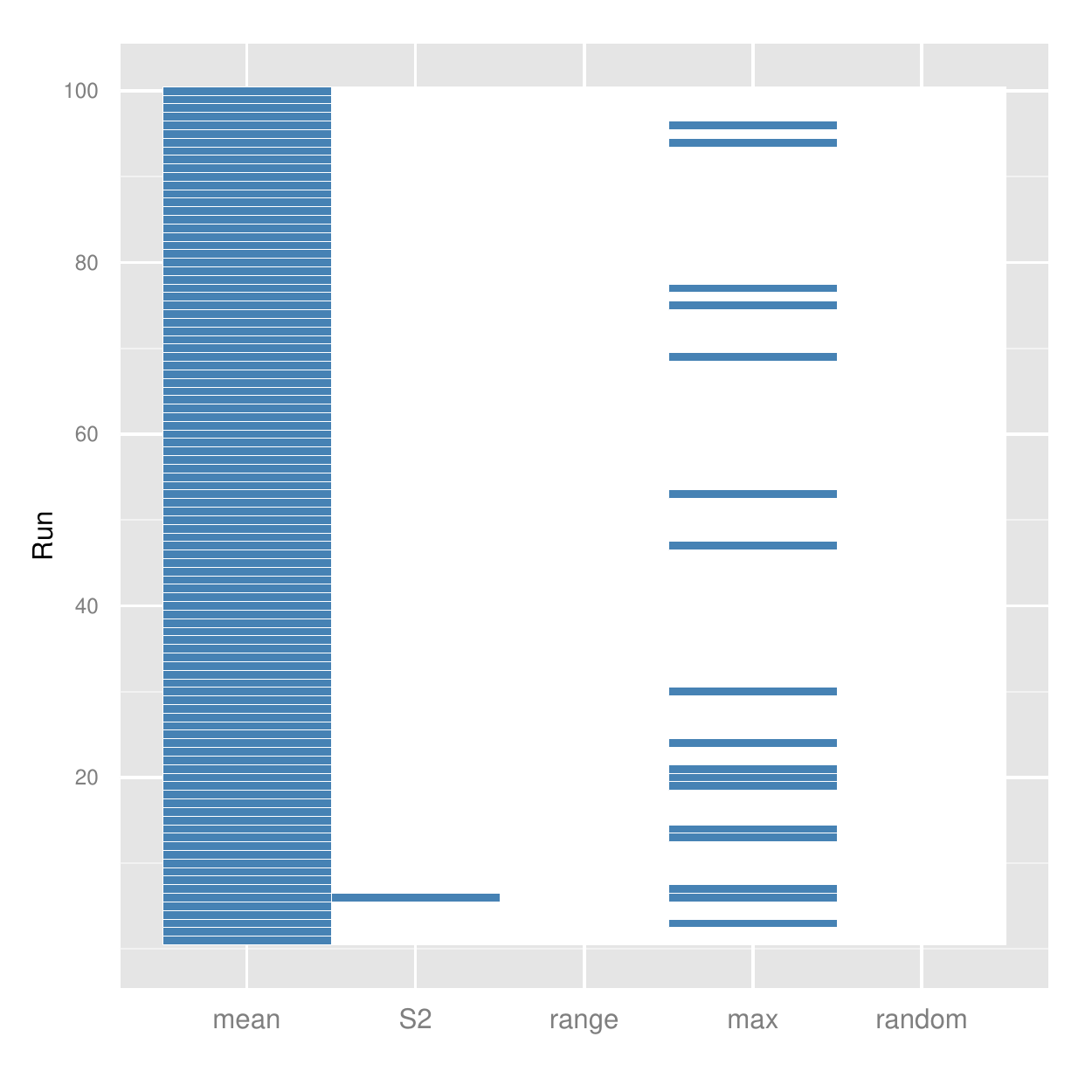}
\includegraphics[width=0.45\textwidth]{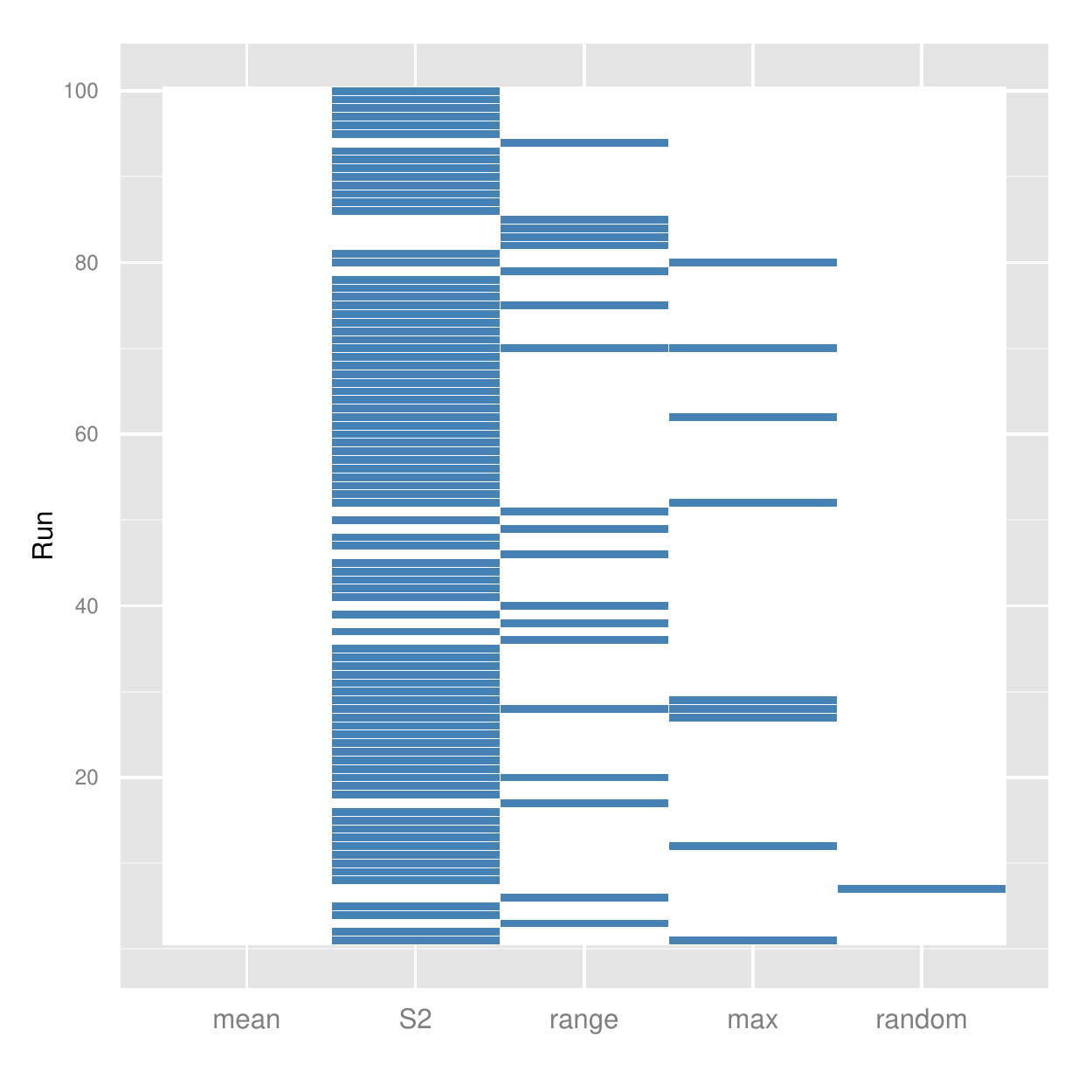}
\caption{ \label{fig:NormalStats} Summary statistics selected in 100 runs of the automated summary statistic selection. Left: Statistics selected for parameter inference (the union of statistics found under model 1 and model 2). Right: Additional statistics chosen for the joint space.  }
\end{center}
\end{figure}

\begin{figure}[p]
\begin{center}
\includegraphics[width=0.45\textwidth]{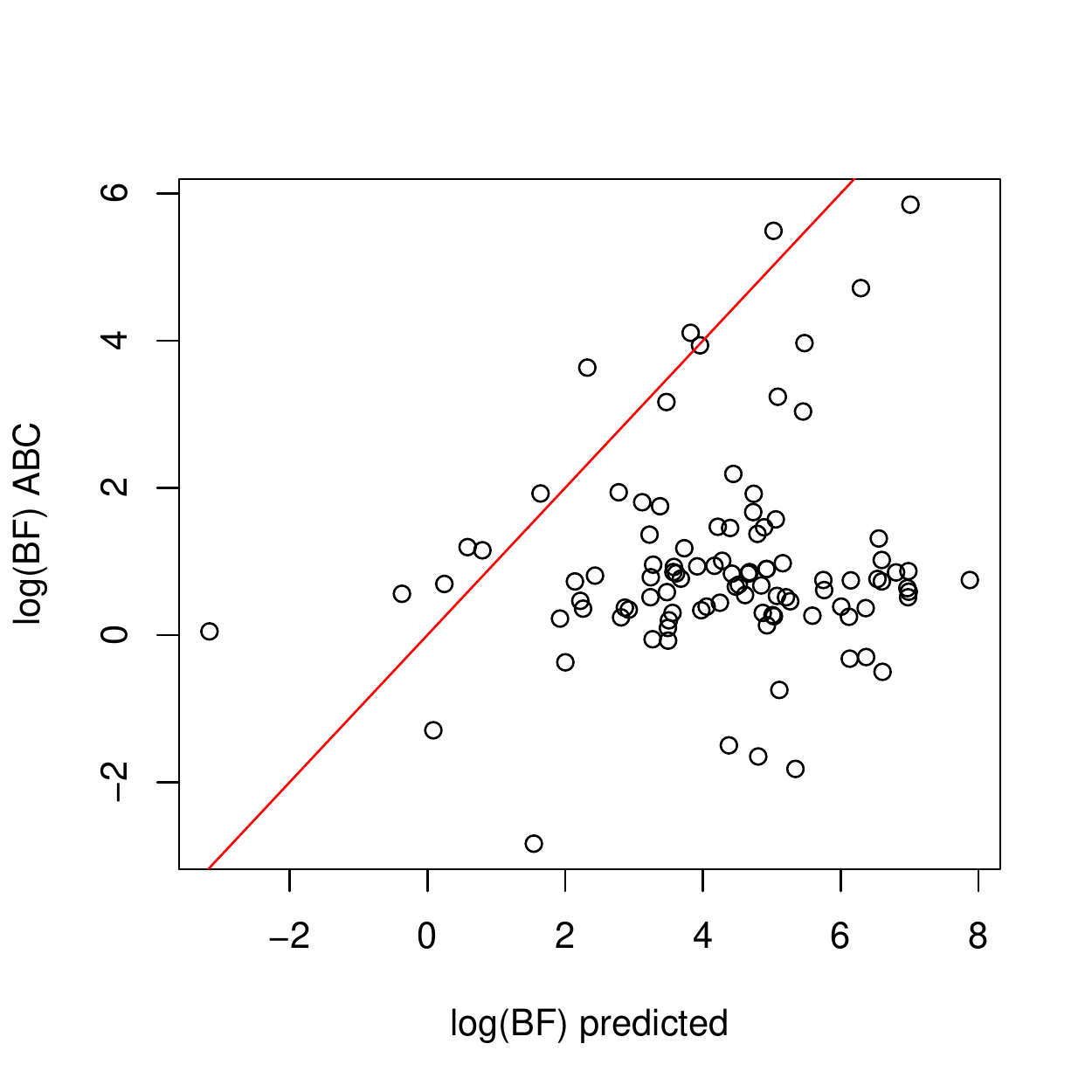}
\includegraphics[width=0.45\textwidth]{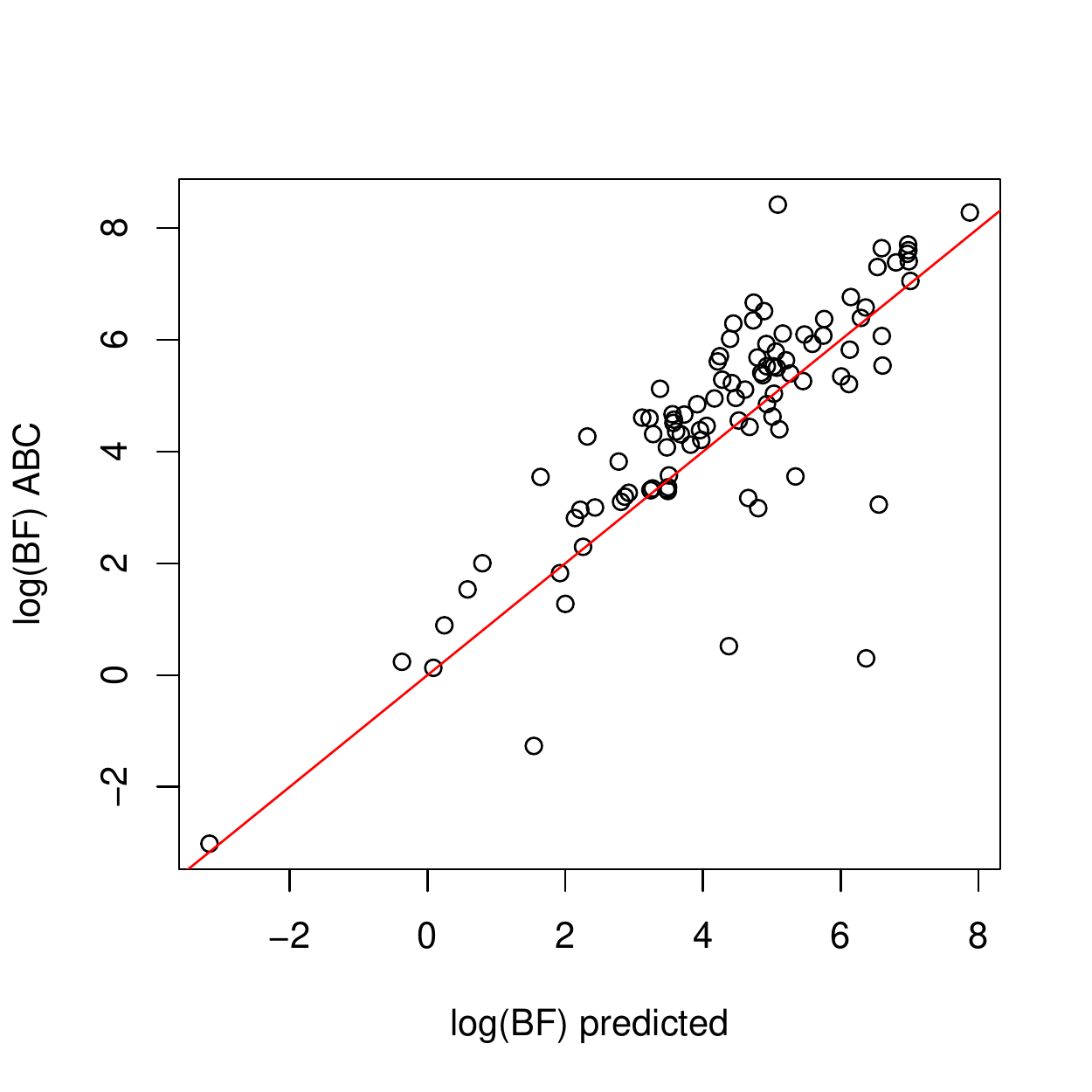}
\caption{ \label{fig:NormalBF} Predicted vs approximated log Bayes Factor for the normal toy model. Left: The case for sufficient statistics selected for parameter inference. Right: The case for sufficient statistics selected for the joint space. In both cases the red line represents the line $y=x$. }
\end{center}
\end{figure}

\subsection{Population genetics example}

To further demonstrate the efficacy of our methodology we applied the summary statistic selection procedure to a real world ABC problem, that of model selection in population genetics. Data were generated using coalescent simulations \citep{Hudson:1991vo} from three competing models, producing $100$ data sets from each for fixed parameters using a modified version of the MS software package downloadable from \url{http://home.uchicago.edu/rhudson1/source/mksamples.html}).

The models considered were:
\begin{description}
\item[Model 1] Constant population size (with population mutation rate $\theta=20$, corresponding to a 20,000bp stretch of DNA in a population of size $N=1,000,000$.
\item[Model 2] Exponential growth model with exponential growth rate $\gamma=0.4$ and all other parameters as above.
\item[Model 3] Two island model with scaled migration rate $m=10$ and all other parameters as above.
\end{description}

 To perform ABC we generated $5,000,000$ samples of datasets comprising 100 chromosomes for each model; in each case the population mutation rate $\theta$ \citep{Ewens2004} was drawn from the prior $\mathcal{U}(5,30)$; the real value for which the data were generated was $\theta=20$ (measured in units of total population size). The summary statistics summarised below were then used in our ABC summary statistic selection framework to derive sufficient sets of summary statistics for model selection on the observed data. The summary statistics calculated were:
\begin{description}
\item[S1] Number of Segregating Sites, $N_S$.
\item[S2] Number of Distinct Haplotypes, $N_H$.
\item[S3] Homozygosity, $h_H$, where $h$ is the probability that two haplotypes are identical,
$$
h_H= \sum_{h=1}^{N_H}\nu_h^2.
$$
\item[S4] Average SNP Homozygosity, $$
\bar{h}_S=\sum_{i=1}^{N_S} (\nu_0(i)^2+\nu_1(i)^2).
$$
\item[S5] Number of occurences of most common haplotype, $f_H$.
\item[S6] Mean number of pair-wise differences between haplotypes, $T$.
\item[S7] Number of Singleton Haplotypes, $f_{sH}$
\item[S8] Number of Singleton SNPs, $f_{sS}$.
\item[S9] Linkage Disequilibrium measured by 
$$
\overline{r^2}= \frac{2}{N_S(N_S-1)}\sum_{i=1}^{N_S-1}\sum_{j=i+1}^{N_S} \frac{\left(\nu_{00}(i,j)-\nu_0(i)\nu_0(j)\right)^2}{\nu_0(i)\nu_1(i)\nu_0(j)\nu_1(j)}
$$ 
\item[S10] Fraction of pairs of loci which violate the four-gamete test, i.e. for which the two-locus haplotypes $00$, $01$, $10$ and $11$ exist.
\item[S11] Random variable, $\rho\sim{\mathcal U}_{[0,1]}$.
\end{description}
where $N$ is the number of sequences in the data, and for each SNP locus, $i$, let $\nu_0(i)$ $\nu_1(i)$ denote the frequencies of the ancestral and derived alleles; further for any haplotype $h$, $\nu_h$ is the corresponding frequency.
\

\begin{figure}[p]
\begin{center}
\includegraphics[width=0.48\textwidth]{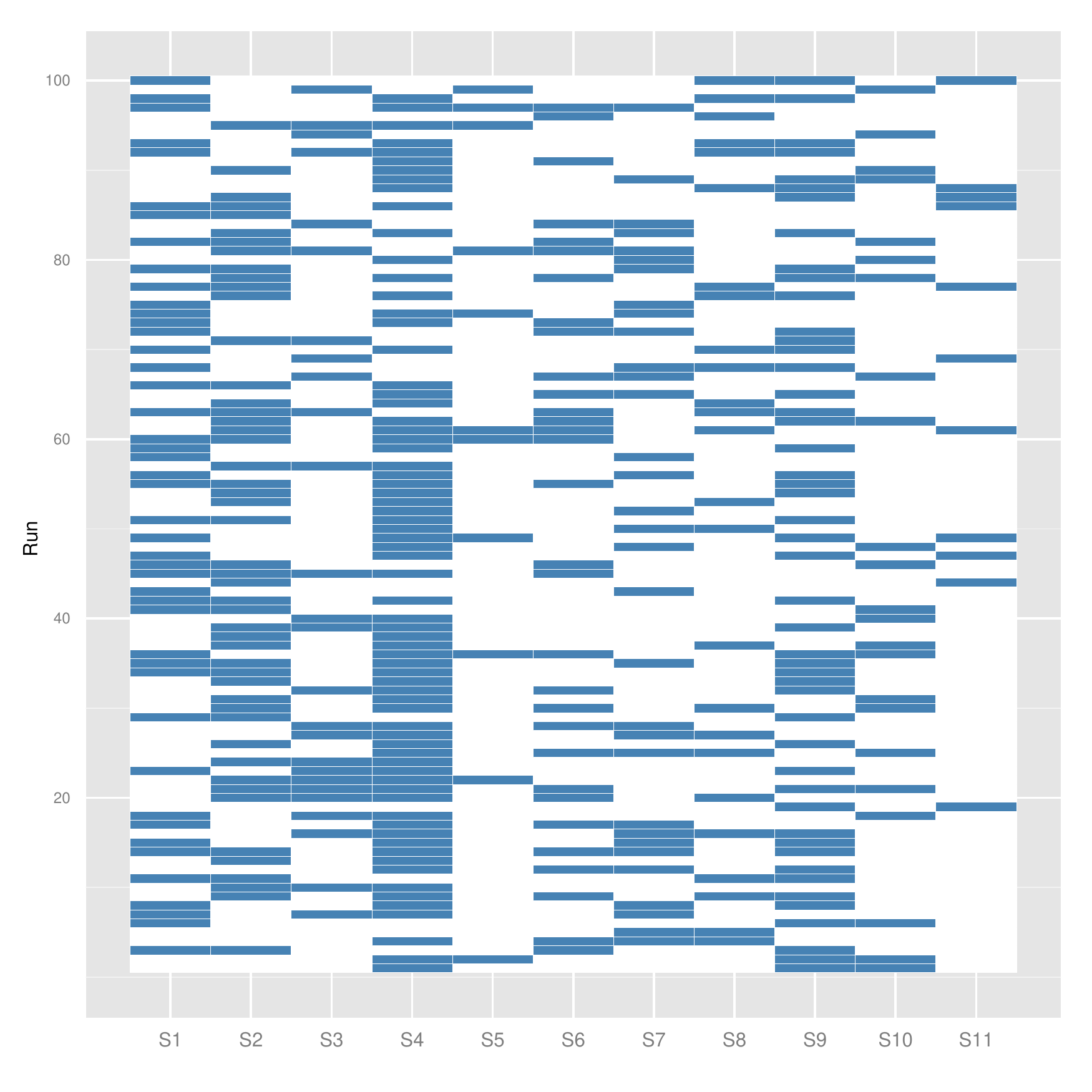}
\includegraphics[width=0.48\textwidth]{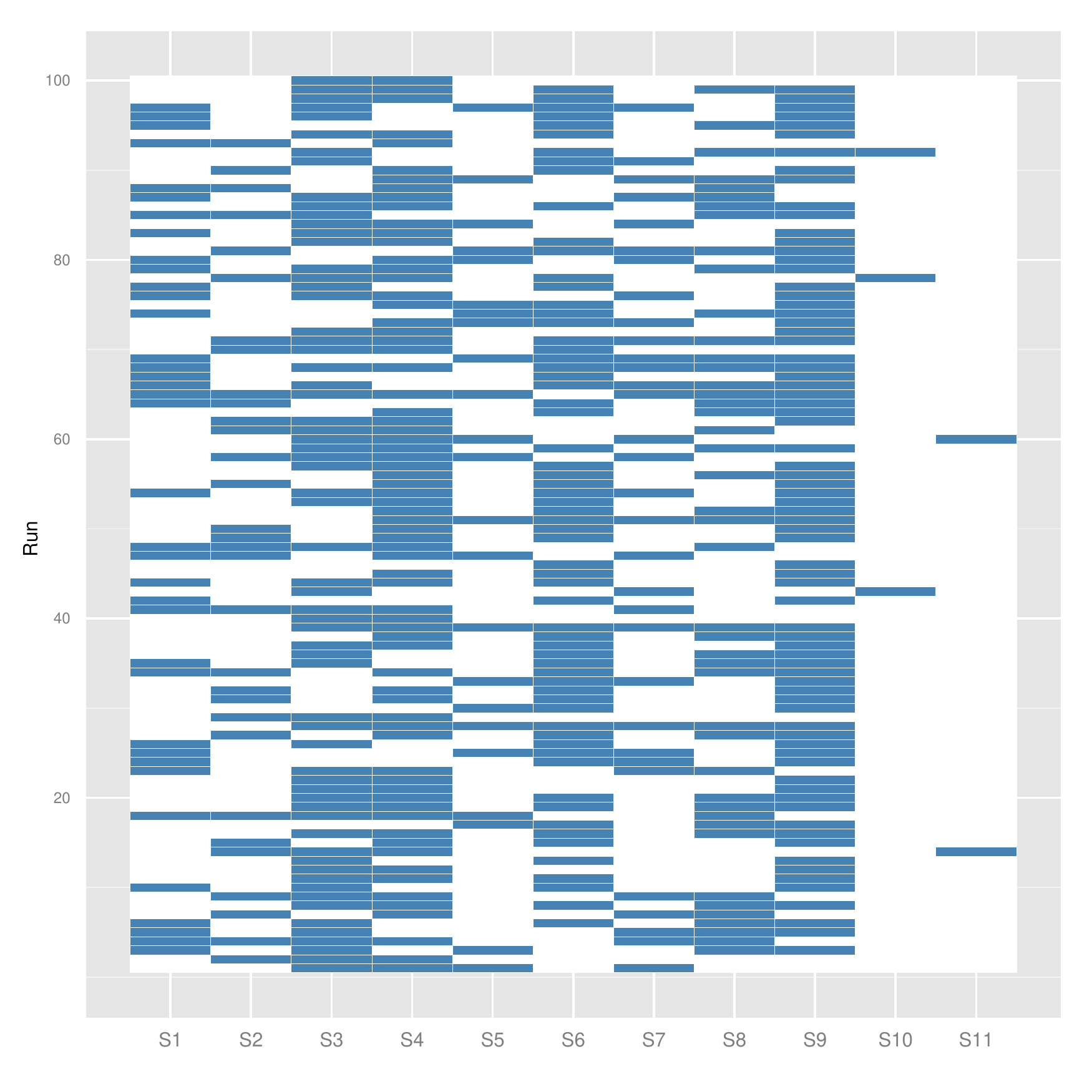}
\includegraphics[width=0.48\textwidth]{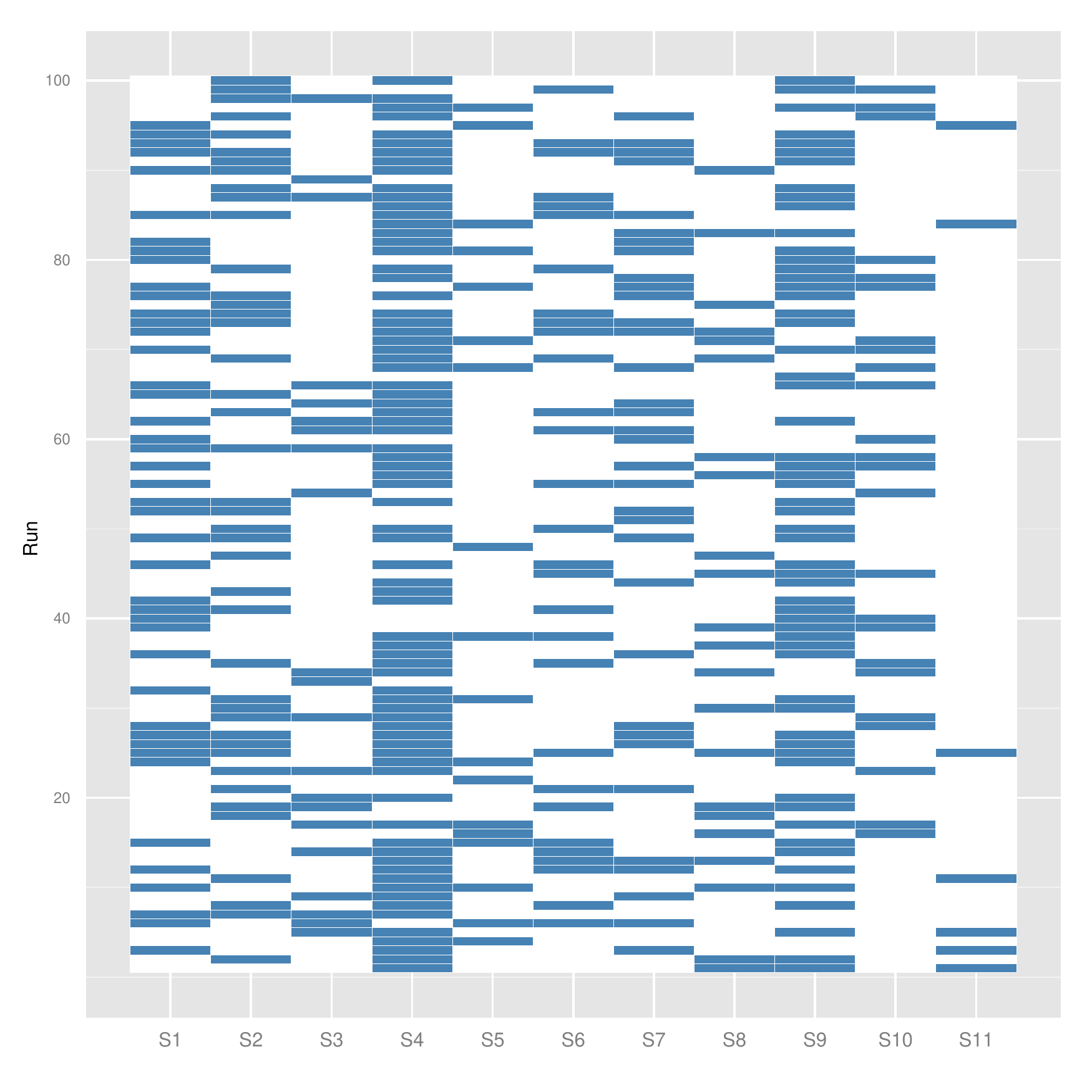}
\caption{ \label{fig:PopMs} Summary statistics selected in 100 runs of the automated summary statistic selection procedure on simulated data sets from our three population genetics models. (each run is performed on a different simulated observed data point). A) Statistics chosen for model selection with data generated from model $1$. B) Statistics chosen for model selection with data generated from model $2$. C) Statistics chosen for model selection with data generated from model $3$. }
\end{center}
\end{figure}

The results of the selection process when performed over 100 different simulated data sets from each of the three models considered are shown in figure \ref{fig:PopMs}. It is apparent that the chosen statistics vary between data sets generated quite considerably --- this is to be expected as the statistics required for sufficiency will vary depending on the data. For data generated by all of the models it is apparent that \textbf{S4}, the average SNP homozygosity, is selected often, whilst as expected the uninformative random statistic \textbf{S11} is rarely chosen. For data generated under the exponential growth model, \textbf{S4}, as well as \textbf{S3} (homozygosity), \textbf{S6} (mean number of pair-wise differences between haplotypes) and \textbf{S9} (linkage disequilibrium), appear to be favoured by the model selection approach. Data generated from this model apparently requires more statistics than data generated from the Null model to achieve sufficiency. The method applied to data generated from the two island model selects statistics \textbf{S4} and \textbf{S9} often, interestingly seeming to require fewer statistics than the exponential growth model to perform model selection.
\par
This is a new and initially perhaps surpising finding: the summaries chosen by our model selection approach depend subtly on the true data-generating model. This is, by hindsight, however, not unexpected: we are trying to achieve sufficiency for model parameters first, and then pool the statistics required to do just that for all models, before refining this set of statistics in order to obtain sufficiency for model selection. As some models will generate data that is more difficult to obtain under other models than is the case vice versa, such relative biases will affect the set of statistics chosen. In light of population genetics theory, therefore, our observations are completely in line with our understanding of coalescent processes (see \eg \cite{Hein:2005aa}). 

\subsection{Random walk models}

We also apply our framework to the problem of model selection on random walks, using a number of summary statistics. The models \citep{Rudnick:2010aa} under consideration were:

\begin{description}
\item[Model 1] Brownian motion.
\item[Model 2] Persistent random walk (where the walk is more likely to continue in the same direction over successive steps but does not have a particular favoured orientation).
\item[Model 3] Biased random walk (where one direction is favoured).
\end{description}

We used five summary statistics, that are individually not sufficient in more than one dimension for any of the random walk models:

\begin{description}
\item[S1] Mean square displacement.
\item[S2] Mean x and y displacement.
\item[S3] Mean square x and y displacement.
\item[S4] Straightness index.
\item[S5] Eigenvalues of gyration tensor (reference random walks book).
\end{description}

Since the models have multiple parameters, we can no longer apply a $\chi^2$ test to select sufficient statistics, and so instead we approximate the KL-divergence using the posterior, applying the formula described in \citep{Boltz:2007hg},
\begin{equation}
\label{eq:klc}
KL(p_X || p_Y) \approx \log\frac{N_V}{N_U-1}+d\mathrm{E}_U[\log \rho_k(\cdot,V)] - d\mathrm{E}_U[\log \rho_k(\cdot,U)],
\end{equation}
where $U$ and $V$ are the sets of posterior particles drawn from distributions $p_X$ and $p_Y$ respectively, $d$ is the number of parameters and $\mathrm{E}_U[\log \rho_k(\cdot,V)]$ is the expectation of the distance to the $k$th nearest neighbour in the set of particles $V$, $\rho_k(u,V)$ of each particle $u \in U$.

Applying formula \eqref{eq:klc} in our summary statistic selection framework to data simulated from the three different models over $100$ runs, the statistics shown in figure \ref{fig:RwMs} are chosen. Again it is apparent that there are some differences in the selected statistics for different data sets generated. Looking at the summary statistics selected by our method, statistic \textbf{S5}, the eigenvalues of the gyration tensor, a measure of the anisotropy of the random walk, appears to be chosen often for data generated by all three models. There also appears to be a slight preference for statistic \textbf{S2}, the mean x and y displacement, which can be understood given that this statistic is necessary for sufficiency for parameter inference on the biased random walk model. We need to stress that the structure of the data here is complex and summary statistics are expected to be hugely variable. 
\begin{figure}[p]
\begin{center}
\includegraphics[width=0.48\textwidth]{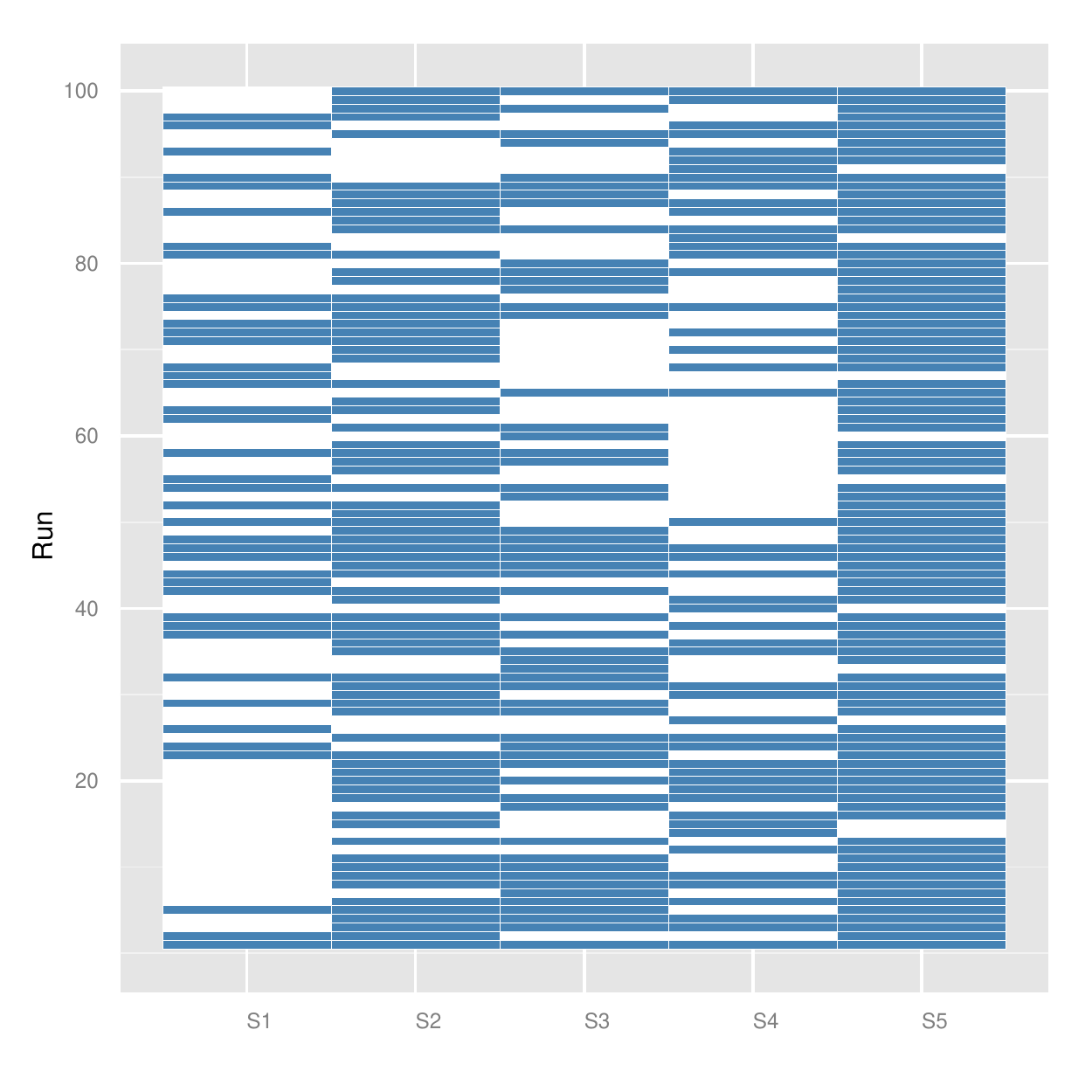}
\includegraphics[width=0.48\textwidth]{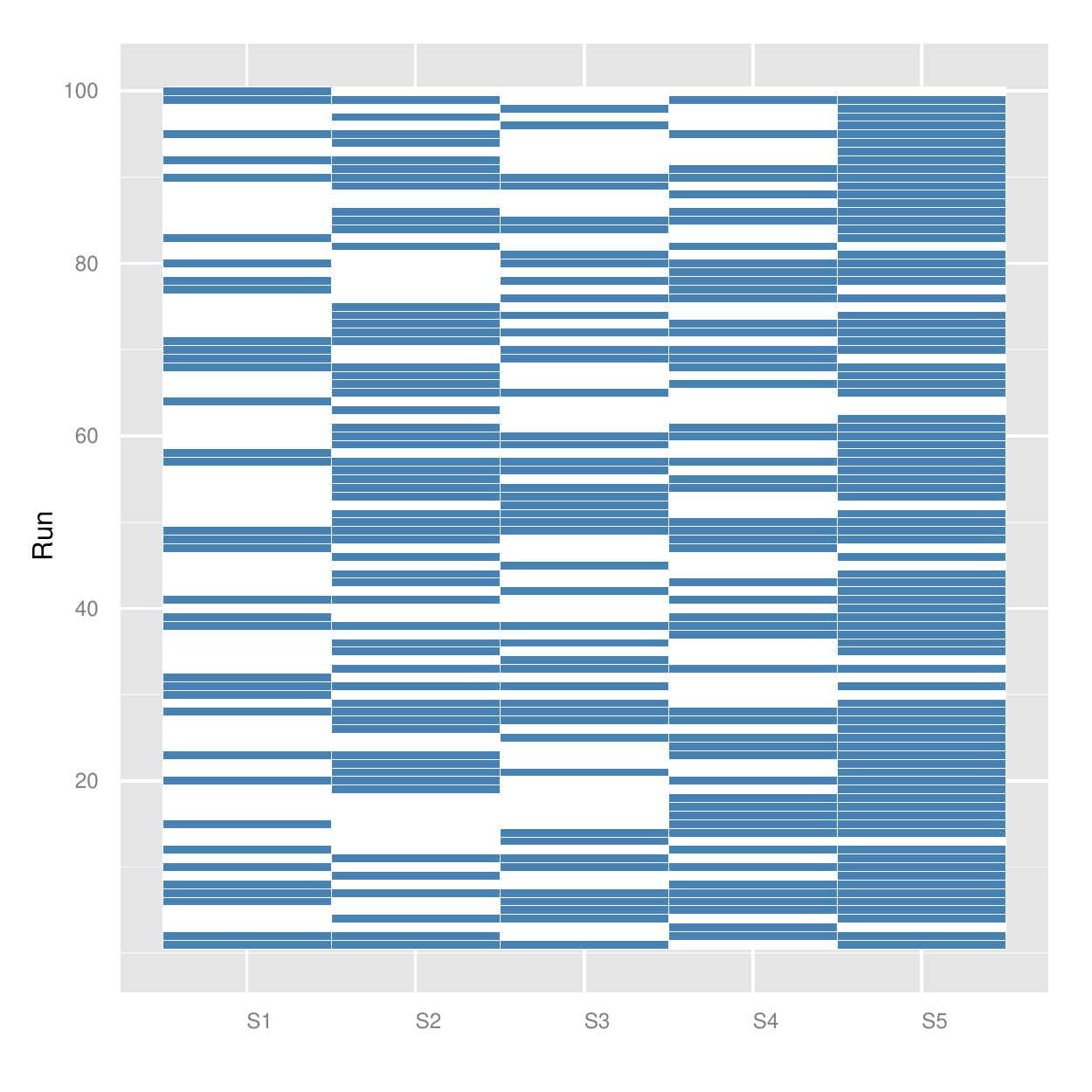}
\includegraphics[width=0.48\textwidth]{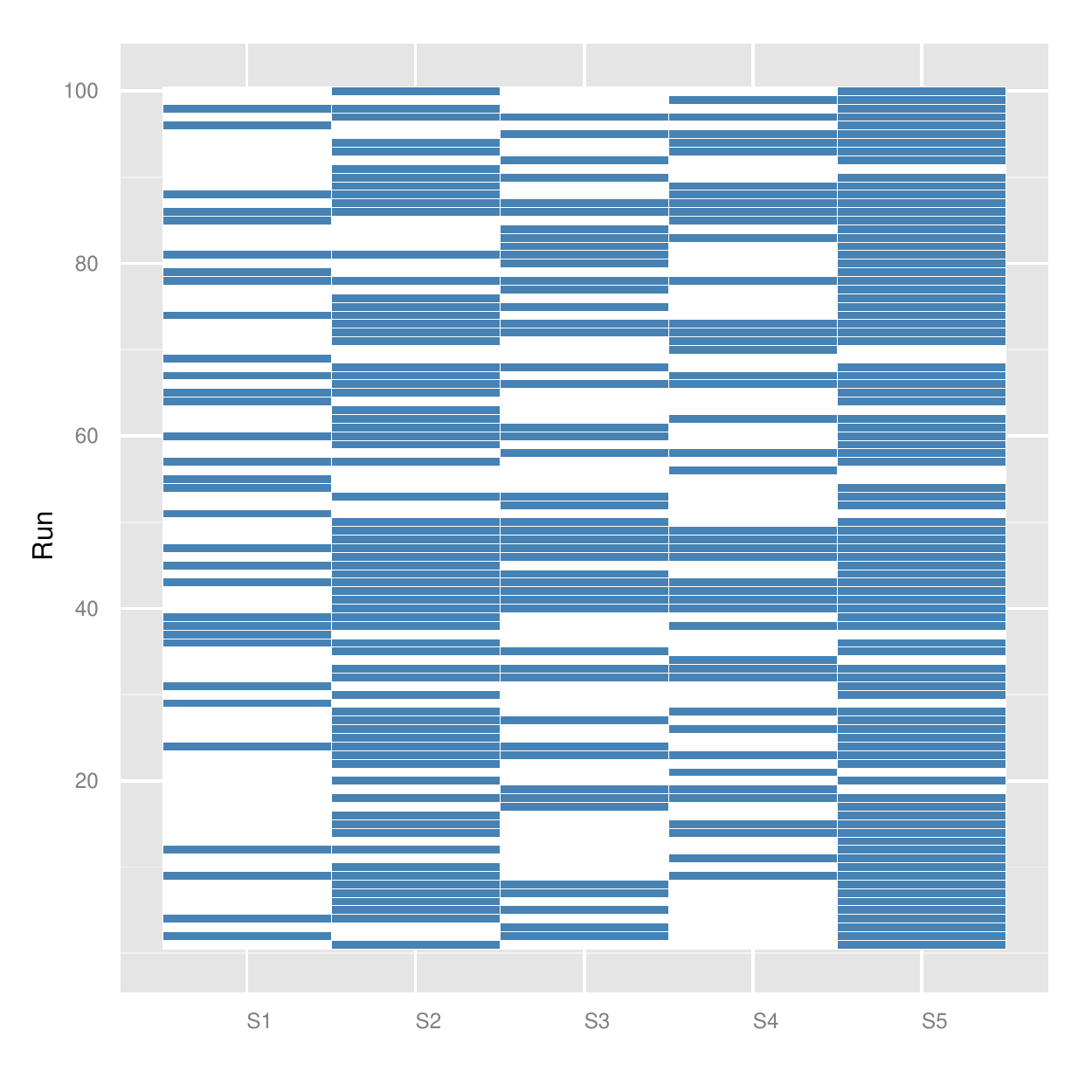}
\caption{ \label{fig:RwMs} Summary statistics selected in 100 runs of the automated summary statistic selection on different simulated data sets, considering three different random walk models. A) Statistics chosen for model selection with data generated from model $1$. B) Statistics chosen for model selection with data generated from model $2$. C) Statistics chosen for model selection with data generated from model $3$. }
\end{center}
\end{figure}

\section{Discussion}
Sufficient statistics are rare; in convenient form --- \ie where the number of statistics is equal to the number of parameters to be estimated --- they are restricted to problems that can be described in terms of models that belong to the exponential family \citep{Lehmann:1993aa,Didelot:2010wo}. As previous authors have pointed out it is necessary to develop methods that construct sets of statistics that are (at least approximately \citep{LeCam:1964wk,Kusama:1976uq}) sufficient \citep{Joyce:2008vm,Nunes:2010dv,Fearnhead:2010vj,Fearnhead:2010wg}. It is either this, or reinterpreting ABC-based inferences not as approximations to the full Bayesian (and thus likelihood-based) apparatus but as inference procedures in their own right \citep{Wilkinson:2008uy,Drovandi:2011cb}, potentially systematically biased or for approximate models. A third approach, previously advocated, is to consider model checking rather than model selection as a viable way of ensuring that only appropriate models are calibrated against data. We believe that the latter position fails to acknowledge the role of sufficiency of statistics also in the context of parameter estimation; and we will briefly return to this point below after having addressed the other two points.
\par
All methods aimed at constructing collections of statistics that taken together are (approximately) sufficient will fail, almost trivially, unless an exhaustive set of summary statistics can be envisioned which fulfils the sufficiency criteria as outlined above. If that is not the case, then we might naively expect that all candidate summary statistics from our starting set $\mathcal{S}$ will be included in $\U$. This, however, need not (and we believe generally will not) be the case, as the information theoretical framework will tend to bias against inclusion of statistics that are in some way co-linear to any statistics that are already included in the constructed set. It is, of course, in principle possible to use the KL divergence with respect to the distribution obtained with the full data as an overall benchmark, but in cases where this is indeed possible, it may be best to use the full data (see \eg \cite{Toni:2009gm}) for inference rather than risk the information reduction inherent to most summary statistics.
\par
There has been much interest in trying to interpret ABC not solely as an approximation to the ``true" posterior, but as an inferential framework in its own right \citep{Wilkinson:2008uy,Drovandi:2011cb}. This is perhaps an attractive option. One way of achieving this shift in perspective is to consider distributions such as
$$
p(M,\theta|S)
$$
as distributions which specify the probability of a parameter and model being in concordance with a given summary statistic. If all we care about is that a model and parameter combination have high (or low) probability of producing data with certain mean/maximum/minimum or any other summary statistic value, then this is perfect. It is easy to envisage scenarios where we are only interested in certain aspects of the data (such as maximum water levels). 
 ABC methods can be used to infer parameters (and models) that are more likely to give rise to simulated data that shares some but not all characteristics of the data. Interestingly, this would also allow us to employ ABC as a design tool \citep{Barnes:2011ua}: we specify the data (or system behaviour) that we would like to observe and infer parameters (and models) which have high probability of producing these types of behaviour.
\par
While this may perhaps seem like sophistry it does also have serious implications for model checking: any ABC approach that is based on summary statistics will infer model parameters (or marginal model posteriors) that reflect the behaviour encoded by these statistics. Thus we can no longer use these same statistics for model checking. This reflects the need to use non-sufficient summary statistics for model checking from Bayesian posterior predictive distributions: if we perform inference under a model for which a sufficient statistic exists, then calculating the same statistic for replicate data generated from the posterior predictive distribution will result in test statistics that are in line with the observed data, irrespective of the validity of the model. Hence some authors, in particular \cite{Gelman:2003}, strongly advocate the use of graphical model checking techniques over numerical tests. We feel that the situation in ABC reflects some of the same problems that are also encountered in model checking. Thus in an ABC framework, irrespective of whether the statistics are sufficient or not, the posterior distributions reflect the choice of statistics and the same statistics are therefore ill-suited for model checking.
\par
We conclude by reiterating that ABC approaches employing summary statistics rather than the whole data have to fully engage with the level of information-loss inherent to summary statistics. Notions of simple sufficient statistics probably do not apply for most scientifically interesting and challenging problems and the use of statistics rather than the real data will always result in loss of information. Our approach is based around the assessment of information loss and allows the principled construction of sets of statistics (from a candidate set) that capture as much as possible from the observed data. While not a panacea, it is within the computational reach of ABC practitioners and makes information loss due to inadequate use of statistics apparent, for both the parameter and model selection problems.

\bibliographystyle{interface}
\bibliography{bibsuffstat.bib,/Users/michael/bibliography/mstbibnet.bib,/Users/michael/bibliography/wholebib.bib}

\end{document}